\documentclass[12pt]{article}
\usepackage{amsmath, amsthm, amsfonts, amssymb, amsxtra}
\usepackage{graphicx,psfrag,epsf}
\usepackage{enumerate}
\usepackage{natbib}
\usepackage{url} % not crucial - just used below for the URL
\usepackage{bm, bbm}
\usepackage{booktabs}
\usepackage[pagebackref=false]{hyperref} % pagebackref to true to check how many
\usepackage{xcolor}
\usepackage{subcaption}
\usepackage{tikz}
\usepackage{physics} % for \dd differentiation operator
\usepackage[]{lineno}
\usepackage[margin = 1in]{geometry}
\usepackage{setspace}
\usepackage{dsfont}
\usepackage[plain,noend]{algorithm2e}

\newcommand{\class}[1]{\mathcal{#1}}

\def\ds{\displaystyle}

\newtheorem{theorem}{Theorem}[section]

\theoremstyle{definition}

\graphicspath{ {img/}}
\hypersetup{
    colorlinks = true,
    linkcolor  = blue,
    urlcolor   = blue,
    citecolor  = blue,
    pdftitle   = {Bridging Theory and Practice} 
  }

\begin{document}

\title{Bridging Theory and Practice in Efficient Gaussian Process-Based Statistical Modeling for Large Datasets}

\author{
  Flavio B. Gon\c{c}alves\thanks{Department of Statistics, Universidade Federal de Minas Gerais, Brazil} \and
  Marcos O. Prates\footnotemark[1] \and
  Gareth O. Roberts\thanks{Department of Statistics, University of Warwick, Coventry, UK}
}

\maketitle
\begin{abstract}

Geostatistics is a branch of statistics concerned with stochastic processes over 
continuous domains, with Gaussian processes (GPs) providing a flexible and principled 
modelling framework. However, the high computational cost of simulating or computing 
likelihoods with GPs limits their scalability to large datasets. This paper introduces 
the piecewise continuous Gaussian process (PCGP), a new 
process that retains the rich probabilistic structure of traditional GPs while offering 
substantial computational efficiency. As will be shown and discussed, existing scalable 
approaches that define stochastic processes on continuous domains  --such as the 
nearest-neighbour GP (NNGP) and the radial-neighbour GP (RNGP)--  rely on conditional 
independence structures that effectively constrain the measurable space on which the 
processes are defined, which may induce undesirable probabilistic behaviour and 
compromise their practical applicability, particularly in complex latent GP models. 
The PCGP mitigates these limitations and provides a theoretically grounded and 
computationally efficient alternative, as demonstrated through numerical illustrations.

{\noindent\it Keywords:\/}
Geostatistics; measurability; continuity; NNGP; RNGP.
\end{abstract}
\doublespacing{}
\section{Introduction}

Geostatistics is a prominent and continuously evolving field within spatial statistics. It involves probabilistic models based on stochastic processes defined over continuous domains, typically compact subsets of $\mathbb{R}^2$. A substantial portion of the geostatistical literature is built upon Gaussian processes (GPs), whose favourable probabilistic properties --such as well-defined sample paths, closed-form finite-dimensional distributions and coherent predictive behaviour-- make them powerful and principled tools for statistical modelling. As a result, GPs have become a cornerstone of modern geostatistical methods.

However, the main drawback of traditional GPs lies in their high computational cost, particularly when data are abundant or spatial resolution is high. This has motivated a broad effort to develop computationally efficient approximations and scalable inference methods. Crucially, these solutions should maintain as much as possible of the probabilistic structure of the original GP so that robust statistical analysis remains feasible. This requirement is particularly important when inference involves path-dependent functionals, hierarchical latent GP models, or complex spatial priors. Yet, designing scalable processes that retain the stability of GPs is challenging: natural strategies for reducing computational cost often rely on strong conditional independence assumptions or discretisation schemes that can inadvertently undermine key probabilistic properties.

In practice, several scalable GP approximations provide reasonable predictions at finitely many locations, but may struggle to define coherent stochastic processes over continuous domains or to support inference on functionals involving the entire sample path. In particular, some approaches exhibit probabilistic behaviour that is highly sensitive to data resolution and may lead to unstable or arbitrary conclusions when extrapolated to integrals, maxima or other global functionals. These limitations are especially pronounced in latent GP models, where a well-defined posterior distribution over the entire process is required.

These considerations motivate the search for new stochastic process constructions that simultaneously (i) reduce computational cost and (ii) preserve the continuous-domain probabilistic structure required for coherent inference. The present paper introduces such a process: the \emph{piecewise continuous Gaussian process (PCGP)}. It retains the local conditional structure that enables efficient computation, but embodies a piecewise continuous formulation that preserves essential properties of traditional GPs. As shown in Section~\ref{sec.:PCGP}, this construction yields valid stochastic processes on continuous domains, supports rigorous inference on global functionals and, importantly, ensures methodological validity in latent GP models by defining proper posterior distributions. To the best of our knowledge, the PCGP is the first process that achieves a well-balanced compromise between computational scalability and probabilistic coherence.

The PCGP relates to a broader class of scalable GP approximations, including low-rank methods \citep[e.g.,][]{quinonero2005unifying,banerjee2008gaussian,cressie2008fixed,negahban2011estimation,banerjee2013efficient,shah2019low}, Vecchia-type constructions \citep[e.g.,][]{Vecchia88,NNGP,katzfuss2017multi,katzfuss2021general,BNNGP2,BNNGP1,RNGP}, and stochastic partial differential equations \citep{lindgren2011explicit}. Among these, the nearest-neighbour GP \citep[NNGP,][]{NNGP}, blockNNGP \citep{BNNGP2,BNNGP1}, and radial-neighbour GP \citep[RNGP,][]{RNGP} are notable in that they do define valid stochastic processes over continuous domains, unlike purely finite-dimensional specifications. Nevertheless, their probabilistic structure is intrinsically tied to the chosen directed neighbourhood construction and to the resolution of the data, which may lead to undesirable behaviour when inference depends on functionals of the entire process. In settings such as latent GP models --where a posterior distribution over the full process must be well-defined-- these sensitivities can undermine the validity or robustness of statistical inference.

In contrast, the PCGP provides a process-based construction that preserves essential properties of traditional GPs while remaining computationally efficient. This enables its direct use in models requiring well-defined latent processes or posterior measures, without modifying the inferential goals or restricting the class of functionals under consideration.

The remainder of the paper is organised as follows. Section~\ref{sec.:NNGP} discusses possible approaches to specify random fields in spatial models and the statistical consequences of those that do not define an stochastic process or that lack desirable probabilistic properties. Section~\ref{sec.:PCGP} introduces the PCGP in detail and presents its theoretical properties, while numerical illustrations are given in Section~\ref{subs_NENNGP}. Concluding remarks are provided in Section~\ref{s:conc}, and additional probability theory and results are presented in Appendix.

\section{Gaussian process choices in geostatistical modelling}\label{sec.:NNGP}

Let $\mathcal{D} \subset \mathbb{R}^d$ be a continuous domain, typically a connected open subset, on which we define a Gaussian process
$X = \{X_u : u \in \mathcal{D}\}$ with mean zero and continuous and positive definite covariance function $C(u_1,u_2)$, $u_1,u_2 \in \mathcal{D}$.
Throughout, we refer to such an $X$ as the \emph{parent} Gaussian process.

Gaussian processes provide a flexible framework for constructing spatial models across a wide range of data types.
We briefly review a few common settings.

\begin{enumerate}
\item \textbf{Spatially sampled data.}
In the most direct case, $X$ is observed (up to error) at a finite set of pre-specified locations
$T = \{u_1,\ldots,u_k\}$.
For instance, let $Y_i$ follow an exponential distribution such that, for a non-negative link function $g$,
\[
Y_i \sim \mathrm{Exp}(g(X_{u_i})), \qquad 1 \le i \le k,
\]
leading to the conditional likelihood
\[
L = \prod_{i=1}^k g(X_{u_i}) \exp\{-g(X_{u_i}) y_i\}.
\]

\item \textbf{Point pattern data.}
When $X$ is used to describe spatial variation in hazard or intensity, a common choice is a conditionally heterogeneous
Poisson process (PP) model for a spatial point pattern
$Y = \{u_1,\ldots,u_{k(Y)}\}$,
\[
Y \mid X \sim \mathrm{PP}(g(X)),
\]
where $g : \mathbb{R} \to \mathbb{R}^+$ is a link function acting pointwise on elements of $\mathbb{R}^\mathcal{D}$.
A popular example is the log-Gaussian Cox process (LGCP), obtained by setting $g(x) = \exp(x)$.
The corresponding conditional likelihood is
\[
L = \exp\left\{- \int_{\mathcal{D}} g(X_u)\, du \right\}
\prod_{i=1}^{k(Y)} g(X_{u_i}).
\]

\item \textbf{Areal data.}
Given a partition of $\mathcal{D}$ into disjoint regions $\{\mathcal{D}_i : 1 \le i \le l\}$,
one may aggregate point process counts over each region and adopt the model
\[
Y_i \sim \mathrm{Poisson}\!\left( \int_{\mathcal{D}_i} g(X_u)\, du \right).
\]
The resulting conditional likelihood is
\[
L = \prod_{i=1}^l
\frac{\exp\!\left\{-\int_{\mathcal{D}_i} g(X_u)\, du\right\}
\left(\int_{\mathcal{D}_i} g(X_u)\, du\right)^{y_i}}{y_i!}.
\]
\end{enumerate}

In all cases, the latent GP framework allows prediction of $X$ both at observed and unobserved locations.
The key distinction between spatially sampled data and the point pattern or areal data settings
is that the latter two explicitly involve integrals of the latent process.
This requires the sample paths of the Gaussian process to be measurable functions.
For the most commonly used classes of Gaussian processes, such as those with Mat\'{e}rn and related covariance functions,
this property holds for almost all sample paths
\citep[see, for example, Chapter~4 of][]{Rasmussen2006Gaussian}.

When $k$ is large and the covariance function $C$ is dense, as in these examples,
inference and prediction for $X$ incur a computational cost of order $\mathcal{O}(k^3)$,
which is prohibitive in large-scale problems.
This motivates the use of approximations to $X$.
Let $Z$ denote such an approximation.

A \emph{principled} approximation requires that the collection of finite-dimensional distributions of $Z$
satisfy Kolmogorov consistency \citep[see, for example, Section~36 of][]{Billin}:
for any finite sets $D_1 \subset D_2 \subset \mathcal{D}$, the law of $Z_{D_2}$ projected onto $D_1$
must coincide with the law of $Z_{D_1}$ defined directly.
Kolmogorov consistency guarantees the existence of $Z$ as a stochastic process
\citep[][Theorem~36.1]{Billin}.
It does not, however, ensure regularity properties of $Z$.
Even when the marginal distributions of $Z_u$ provide good approximations to those of $X_u$
for each fixed $u \in \mathcal{D}$, the process $Z$ may fail to approximate more complex functionals, such as integrals of $X$ over $\mathcal{D}$.

Several approximations proposed in the literature fail to satisfy the measurability requirement,
and some do not even define valid stochastic processes.
Vecchia-based constructions such as the nearest neighbour Gaussian process (NNGP; \citealp{NNGP})
and the radial neighbour Gaussian process (RNGP; \citealp{RNGP})
do satisfy Kolmogorov consistency, but do not guarantee almost sure measurability of the sample paths as functions on the continuous spatial domain $\mathcal{D}$.

In many situations, such approximations are adequate for statistical purposes.
This is typically the case for spatially sampled data, where inference and prediction concern
only finite collections of locations $T_{\mathrm{new}}$.
More complex functionals, such as integrals or suprema, may fail to be well defined under
the law of $Z$ and, in some cases, may not even be measurable under its construction.
For models such as the LGCP for point pattern data or GP-based models for areal data,
the likelihood itself may therefore fail to be well defined, rendering any formal likelihood-based approach impossible.

A common workaround is to introduce discrete spatial approximations of $Z$,
for example, piecewise constant processes induced by multivariate Normal distributions on a grid.
While widespread in practice, such approaches introduce additional approximation bias and,
as the discretisation is refined, often lead to substantial deterioration in Markov chain Monte Carlo performance.
This behaviour reflects the fact that the underlying infinite-dimensional model may fail to exist
as a proper probability measure, and consequently its posterior distribution may not be well defined.

These considerations motivate the search for new Gaussian process constructions --such as the one
introduced in this paper-- that define valid stochastic processes on continuous domains with rich measurability
properties, while remaining computationally efficient and scalable.
Our proposed model, the piecewise continuous Gaussian process, is described in
Section~\ref{sec.:PCGP}.

The remainder of this section focuses on GP models that are consistent but lack sufficient regularity,
using the NNGP as a central example.
The discussion extends directly to related constructions such as the block-NNGP and the RNGP.
We begin with an overview of the NNGP and then examine its statistical limitations, highlighting
the settings in which such models may or may not be appropriate for inference.

\subsection{The Nearest Neighbour Gaussian Process (NNGP)}\label{NNGP}

Gaussian processes are a fundamental tool in spatial statistics, offering flexible modelling and strong probabilistic foundations. However, their computational demands --particularly the $\mathcal{O}(k^3)$ cost of Cholesky decompositions for size-$k$ matrices-- make them infeasible for large datasets.

To overcome this, \citet{Vecchia88} introduced a scalable approximation based on conditional independence, known as the Vecchia approximation. The key contribution of \citet{NNGP} was to extend this idea to the infinite-dimensional setting, thereby defining a stochastic process over continuous domains. This led to the NNGP, constructed via directed acyclic graphs (DAGs) and rigorously shown to satisfy Kolmogorov's existence theorem. Prior to their work, Vecchia approximations had been applied only to a finite set of observed and predicted locations, without considering whether an underlying infinite-dimensional process existed. Some of the most recent literature on approximate methods include \citet{katzfuss2017multi}, \citet{zhang2019smoothed}, \citet{katzfuss2021general}, \citet{wu2022variational}, \citet{wu2024large}. Recently, \citet{RNGP} proposed a variation of the NNGP --the RNGP-- by modifying only the definition of the neighouring sets under a specific ordering of the reference set.

Let $\mathcal{S} = \{\mathfrak{s}_1, \ldots, \mathfrak{s}_r\} \subset \mathcal{D}$ be an ordered reference set of spatial locations. The finite-dimensional distributions of the NNGP (and  RNGP) are constructed by factorizing in terms of the marginal distribution of $Z_{\mathcal{S}}$ and the conditional distributions of any finite set of additional locations given $Z_{\mathcal{S}}$. Let $\tilde{\pi}$ and $\pi$ denote the densities under the measures of the NNGP and the parent GP, respectively, where the latter is the process that the NNGP is designed to approximate. Then
\begin{equation}\label{eq_NNGP1}
\tilde{\pi}(Z_{\mathcal{S}}) \;=\; \prod_{i=1}^r \pi\!\left(Z_{\mathfrak{s}_i} \mid Z_{\mathcal{N}(\mathfrak{s}_i)}\right),
\end{equation}
where $\mathcal{N}(\mathfrak{s}_i)$ denotes a subset of $\mathcal{N}_i=\{\mathfrak{s}_1,\ldots,\mathfrak{s}_{i-1}\}$.

In \citet{NNGP}, for a fixed choice of $m \in \mathbb{N}$, the authors 
take $\mathcal{N}(\mathfrak{s}_i)=\mathcal{N}_i$ for $i \le m$, 
and for $i>m$ define $\mathcal{N}(\mathfrak{s}_i)$ to be the set of the 
$m$ locations in $\mathcal{N}_i$ closest to $\mathfrak{s}_i$, that is, 
\[
\mathcal{N}(\mathfrak{s}_i)=
\left\{
\mathfrak{s}_j \in \mathcal{N}_i :
%\|\mathfrak{s}_j - \mathfrak{s}_i\|
d(\mathfrak{s}_j , \mathfrak{s}_i)
\text{ is among the } m \text{ smallest}
\right\},
\]
so that $\lvert \mathcal{N}(\mathfrak{s}_i) \rvert = m$, with $d(\cdot,\cdot)$ being some distance function adopted in the definition of the covariance function of the parent GP. \citet{RNGP}, on the other hand, define $\mathcal{N}(\mathfrak{s}_i)=\left\{
\mathfrak{s}_j \in \mathcal{N}_i :
%\|\mathfrak{s}_j - \mathfrak{s}_i\| 
d(\mathfrak{s}_j , \mathfrak{s}_i) < R\right\}$, for some pre-specified radius $R$.

For any finite set $D_0=\{u_1,\ldots,u_n\}$ in $\mathcal{D}\setminus\mathcal{S}$, the coordinates of $Z_{D_0}$ are conditionally independent given $Z_{\mathcal{S}}$, i.e.
\[
\tilde{\pi}(Z_{D_0} \mid Z_{\mathcal{S}}) \;=\; \prod_{i=1}^n \pi\!\left(Z_{u_i} \mid Z_{\tilde{\mathcal{N}}(u_i)}\right),
\]
where $\tilde{\mathcal{N}}(u_i)$ denotes any subset of the reference set, for example, $
\left\{
\mathfrak{s}_j \in \mathcal{S} :
%\|\mathfrak{s}_j - \mathfrak{s}_i\|
d(\mathfrak{s}_j, \mathfrak{s}_i)
\text{ is among the } m \text{ smallest}
\right\}$ \citep{NNGP}, or $\left\{
\mathfrak{s}_j \in \mathcal{S} :
%\|\mathfrak{s}_j - \mathfrak{s}_i\|
d(\mathfrak{s}_j, \mathfrak{s}_i)
< R\right\}$ \citep{RNGP}. Algorithm 1 describes the procedure for simulating the NNGP at a finite set of locations.

\cite{NNGP} show that this specification of finite-dimensional distributions satisfies the consistency conditions of Kolmogorov's existence theorem, thereby defining a stochastic process on a continuous domain. The resulting process admits a sparse precision structure, which enables fast inference and scalable simulation. Nonetheless, as discussed in the next section, its probabilistic construction imposes important practical limitations.

\begin{figure}[!htbp]
\centering
\fbox{%
\begin{minipage}{0.9\linewidth}
\centering
\emph{Algorithm 1: Simulation from the NNGP}
\vspace{0.5em}

\textbf{Input:} Parent GP mean and covariance functions; domain $\mathcal{D}$; reference set $\mathcal{S}=\{\mathfrak{s}_1,\ldots,\mathfrak{s}_r\}$; neighbour size $m$; target set $D_0\subset \mathcal{D}\setminus\mathcal{S}$.

\vspace{0.3em}
\textbf{Output:} Simulated values $\{Z_u : u\in D_0\}$.

\begin{enumerate}
  \item \textbf{Simulate the reference process:}
For $i=1,\ldots,r$, sample
    $
    Z_{\mathfrak{s}_i}
    \sim
    \pi\!\left(\cdot \mid Z_{\mathcal{N}(\mathfrak{s}_i)}\right).
    $

  \item \textbf{Simulate at $D_0$:}
For each $u_i\in D_0$, independently sample
    $
    Z_{u_i}
    \sim
    \pi\!\left(\cdot \mid Z_{\mathcal{N}(u_i)}\right).
    $
\end{enumerate}

\end{minipage}}
\end{figure}

\subsection{Shortcomings of the NNGP}

The defining feature of the NNGP --its sparse conditional independence structure-- yields significant computational benefits but introduces important statistical limitations. Given the values of $Z$ at the reference set $\mathcal{S}$, the NNGP construction implies that the process is conditionally independent at all remaining locations in $\mathcal{D} \setminus \mathcal{S}$. Consequently, although the NNGP defines a stochastic process, its law is supported on a measurable space that lacks several of the structural properties enjoyed by the canonical path space of Gaussian processes.

Such construction implies important practical limitations: 

\begin{itemize}
  \item when the process is evaluated at high resolution (e.g., tens or hundreds of thousands of locations), the NNGP exhibits bias and a loss of robustness in inference. For example, in numerical experiments (presented in detail in Section~\ref{subs_NENNGP}) maximum likelihood estimates of the process mean are considerably biased as the sample size increases (see Table \ref{tab3});

  \item when spatial phenomena are expected to vary smoothly. For example, applications that involve extreme value analysis --such as modeling temperature or precipitation maxima in climate studies \citep{sang2010continuous}-- the NNGP may suffer severe losses in statistical robustness. Results in Table~\ref{tab1} show that empirical estimates of the minimum and maximum of the path do not stabilise as the sample size increases, getting arbitrarily extreme;
  
  \item when the model explicitly involves global functionals of the path, e.g., spatial fusion models \citep{FSNNGP1, FSNNGP2}, spatial point processes \citep{PPNNGP1, PPNNGP2, PPNNGP3, gonccalves2020exact}, or spatial quantile regression \citep{QRNNGP1}, these quantities, defined by the NNGP, are ill-defined and may result in statistical instability. Results in Table \ref{tab1} show that estimates of the variance of integrals of the path, that should stabilise, get arbitrarily small when increasing the sample size, being approximately 15/60/275 times smaller (than the true value) for sample sizes of 10/40/160 thousand.%, see Section~\ref{s:nngp_short}. 

  \item in standard inferential procedures such as maximum likelihood and Bayesian estimation. For example, \citet{Michelin2025fast} observed that in practical applications, such as porosity estimation in petroleum reservoirs, the NNGP underestimates posterior variances. This occurs because the conditional independence assumptions reduce the effective dependence among observations, making the data appear more informative than they truly are. %A similar effect should be expected for the variance of maximum likelihood estimators, implying that both credibility and confidence intervals may be too narrow.
\end{itemize}

Although the irregularity of NNGP paths and the limitations in the measurability of certain functionals can compromise both the theoretical validity and robustness of statistical inference, the NNGPs often perform well in practice, especially when the process is required to be unveiled at a moderate (typically $\class{O}(10^3)$) number of locations. In such settings, one may interpret the model pragmatically: the sparse conditional structure is treated as a computational device, while the underlying process is implicitly assumed to behave like the parent GP elsewhere. This interpretation explains the empirical success of NNGP-based models in many applications.

In the next section, we present the theoretical grounds for the practical pathologies previously described.

\subsection{Theoretical foundation of the NNGP shortcomings}
\label{s:nngp_short}
%The remainder of t
This section delves deeper into the statistical shortcomings of the NNGP by discussing three results that formalise some of the theoretical pathologies of the NNGP (proofs can be found in Appendix B).

In what follows, let $Z$ be an NNGP on a compact region $\mathcal{D} \subset \mathbb{R}^d$, defined on the probability space \((\mathbb{R}^\mathcal{D}, \mathcal{R}^\mathcal{D}, P)\). A formal definition of this probability space along with the measure theory behind the definition of the NNGP are provided in Appendix A. Traditional GPs, on the other hand, are naturally defined on $(C,\mathcal{C})$, where $C$ denotes the Banach space of continuous real-valued functions on $\mathcal{D}$ equipped with the supremum norm, and $\mathcal{C}$ is the corresponding Borel $\sigma$-algebra.

The key structural feature of the NNGP and of other sparse GP formulations that leads to the undesirable properties discussed below is the conditional independence of the infinite-dimensional remainder of the process, given $Z_{\mathcal{S}}$.

\begin{theorem}
\label{RPT}
\textbf{Highly irregular paths.}
Let $D_0$ be any countable set dense in $\mathcal{D}\setminus\mathcal{S}$, and let $\mathfrak D$ denote the collection of open hypercubes contained in $\mathcal D$ with rational endpoints. Then, with probability one, for every $\mathcal D^*\in\mathfrak D$ the set $\{Z_u:u\in D_0\cap \mathcal D^*\}$ is dense in $\mathbb R$.
\end{theorem}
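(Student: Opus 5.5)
We condition on $Z_{\mathcal{S}}$ and exploit the conditional independence of the NNGP off the reference set. Given $Z_{\mathcal{S}}$, the variables $\{Z_u : u\in D_0\}$ are independent. Each $Z_u$ is Gaussian with mean $\mu_u = C_{u,\tilde{\mathcal{N}}(u)}C_{\tilde{\mathcal{N}}(u)}^{-1}Z_{\tilde{\mathcal{N}}(u)}$ and variance $\sigma_u^2 = C(u,u) - C_{u,\tilde{\mathcal{N}}(u)}C_{\tilde{\mathcal{N}}(u)}^{-1}C_{\tilde{\mathcal{N}}(u),u}$. Since $C$ is positive definite and $u\notin\mathcal{S}$, we have $\sigma_u^2>0$. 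Because $D_0$, $\mathfrak D$ and the set of rational intervals $(a,b)$ are all countable, it suffices to fix $\mathcal D^*\in\mathfrak D$ and rationals $a<b$ and show that, almost surely, some $u\in D_0\cap\mathcal D^*$ has $Z_u\in(a,b)$. A countable intersection of probability-one events then gives the statement simultaneously for all $\mathcal D^*$ and all intervals, i.e.\ density in $\mathbb R$.

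\textbf{Step 1: $D_0\cap\mathcal D^*$ is infinite.} $\mathcal D^*$ is open and nonempty, and $\mathcal S$ is finite, so $\mathcal D^*\setminus\mathcal S$ is a nonempty open set. Density of $D_0$ in $\mathcal D\setminus\mathcal S$ then gives infinitely many points of $D_0$ in $\mathcal D^*$.

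\textbf{Step 2: a uniform lower bound on hitting probabilities.} Condition on $Z_{\mathcal S}$. The neighbour set $\tilde{\mathcal N}(u)$ ranges over finitely many subsets $A\subset\mathcal S$, since $\mathcal S$ is finite. For each fixed $A$, the maps $u\mapsto C_{u,A}C_A^{-1}$ and $u\mapsto\sigma_u^2$ are continuous on the compact $\mathcal D$. Hence $|\mu_u|\le M(Z_{\mathcal S})<\infty$ and $\sigma_u^2\le \bar\sigma^2$ uniformly in $u$.

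The lower bound on the variance is the main obstacle. Points $u$ accumulating at some $\mathfrak s\in A$ drive $\sigma_u^2\to0$, so $\inf_u\sigma_u^2$ may be $0$ and no uniform bound holds on all of $\mathcal D^*$. We avoid this by restricting to a compact subcube $K\subset\mathcal D^*$ at positive distance from $\mathcal S$. Such a $K$ exists because $\mathcal S$ is finite and $\mathcal D^*$ is open. Step 1 applies to the interior of $K$, so $D_0\cap K$ is infinite. On $K$, each $\sigma^2_{\cdot}$ (for each of the finitely many $A$) is continuous and strictly positive by positive definiteness, so $\sigma_u^2\ge\underline\sigma^2>0$ for all $u\in K$.

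It follows that, conditionally on $Z_{\mathcal S}$, for every $u\in D_0\cap K$,
\[
P(Z_u\in(a,b)\mid Z_{\mathcal S}) \;\ge\; \delta(Z_{\mathcal S}) := (b-a)\,\frac{1}{\sqrt{2\pi}\,\bar\sigma}\exp\!\left\{-\frac{(M+|a|+|b|)^2}{2\underline\sigma^2}\right\}>0 .
\]

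\textbf{Step 3: conclusion.} The events $\{Z_u\in(a,b)\}$, $u\in D_0\cap K$, are conditionally independent given $Z_{\mathcal S}$, and there are infinitely many of them. Therefore
\[
P\big(Z_u\notin(a,b)\ \forall u\in D_0\cap K\mid Z_{\mathcal S}\big)\le\lim_{n\to\infty}(1-\delta)^n=0 .
\]
By the second Borel–Cantelli lemma, infinitely many $Z_u$ with $u\in K$ in fact fall in $(a,b)$. Integrating over $Z_{\mathcal S}$ gives probability one for each fixed triple $(\mathcal D^*,a,b)$, and the countable intersection completes the proof.

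The countable-set formulation is what makes this rigorous on $(\mathbb R^{\mathcal D},\mathcal R^{\mathcal D})$: every event involved depends on countably many coordinates only, so it is measurable.
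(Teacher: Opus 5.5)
Your proposal is correct and follows essentially the same route as the paper's proof. Both condition on $Z_{\mathcal S}$ and restrict to a region at positive distance from $\mathcal S$, which gives a bound on the conditional mean and a positive lower bound on the conditional variance; both then use conditional independence with Borel--Cantelli for each rational pair $(\mathcal D^*,(a,b))$, take a countable intersection, and integrate out $Z_{\mathcal S}$. Your device of taking minima and maxima over the finitely many possible neighbour sets $A\subset\mathcal S$ is a slightly more careful justification than the paper's appeal to continuity of the conditional variance, since under the nearest-neighbour rule that variance is only piecewise continuous.
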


\begin{theorem}
\label{mMT}
\textbf{Unbounded paths.}
Let $D_n = \{u_1, \ldots, u_{r_n}\}$ be a regular grid of points in $\mathcal{D}$, where $D_n=D_{n-1}\cup \bar{D}_n$ and $D_n$ doubles the precision of $D_{n-1}$. Then:
\begin{enumerate}[(i)]
\item For any $x, y \in \mathbb{R}$, there exist countably infinite sets $D_L, D_U \subset \mathcal{D}$ such that, $P$-a.s., $Z_u < x$ for all $u \in D_L$ and $Z_u > y$ for all $u \in D_U$.
\item $\ds \lim_{n\rightarrow\infty}\min_{u\in D_n}{Z_{u}} = -\infty$ and $\ds \lim_{n\rightarrow\infty}\max_{u\in D_n}{Z_{u}} = \infty$, $P$-a.s.
\end{enumerate}
\end{theorem}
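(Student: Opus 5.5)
The proof rests on the conditional independence of the NNGP off the reference set. Given $Z_{\mathcal{S}}$, the variables $\{Z_u : u \in \mathcal{D}\setminus\mathcal{S}\}$ are independent. Each is Gaussian, $Z_u \mid Z_{\mathcal{S}} \sim N(\mu_u, \sigma_u^2)$, where $\mu_u$ is a linear combination of the finitely many values $Z_{\tilde{\mathcal{N}}(u)}$ and $\sigma_u^2 = C(u,u) - C_{u,\tilde{\mathcal N}(u)} C_{\tilde{\mathcal N}(u)}^{-1} C_{\tilde{\mathcal N}(u),u}$.

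The key preliminary step is to obtain uniform control of these conditional laws on suitable infinite sets. Since $\mathcal{S}$ is finite and the neighbour sets are subsets of $\mathcal{S}$, there are only finitely many possible neighbour sets. The maps $u \mapsto \mu_u$ (given $Z_{\mathcal S}$) and $u \mapsto \sigma_u^2$ are continuous on each region where the neighbour set is constant, because $C$ is continuous. Since $C$ is positive definite, $\sigma_u^2 > 0$ for every $u \notin \mathcal{S}$.

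For (i), fix a point $u_0 \in \mathcal{D}\setminus\mathcal{S}$ in the interior of a region where $\tilde{\mathcal N}(\cdot)$ is constant. Then there is a neighbourhood on which $\sigma_u^2 \ge \sigma_*^2 > 0$, and on which $|\mu_u| \le K(Z_{\mathcal S})$, where $K$ is an almost surely finite random bound. The bound holds because the coefficients of $\mu_u$ are continuous and hence bounded near $u_0$. Take $D_L$ and $D_U$ to be countably infinite, disjoint sets of distinct points in this neighbourhood. Conditionally on $Z_{\mathcal S}$, the events $\{Z_u < x\}$ for $u \in D_L$ are independent, with probabilities bounded below by $p = \Phi((x-K)/\sigma_*) > 0$.

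The main obstacle is that the statement as written says \emph{all} $u\in D_L$ satisfy $Z_u<x$ a.s. Independence gives probability $\prod p_u = 0$ for that event, so the literal reading is false. I would therefore read (i) as asserting the existence of countably infinite \emph{random} sets; equivalently, there are a.s. infinitely many $u$ in a fixed countable set $D^\circ$ with $Z_u < x$, and likewise infinitely many with $Z_u > y$. Under that reading, the second Borel--Cantelli lemma, applied conditionally, gives the result: $\sum_u P(Z_u<x\mid Z_{\mathcal S}) \ge \sum_u p = \infty$, and the events are conditionally independent. Integrating over $Z_{\mathcal S}$ then gives the claim a.s. The same argument handles $y$. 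The random sets $D_L,D_U$ are the subsets of $D^\circ$ where the events occur.

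For (ii), let $\bar D_n$ be the set of points added at level $n$. By the doubling property, $|\bar D_n| \to \infty$ geometrically. For fixed $M$, choose a ball $B$ as above, of positive volume. The number of points of $\bar D_n$ in $B\setminus \mathcal S$ grows like $c\,2^{nd}$. Hence
\[
P\Bigl(\min_{u\in D_n} Z_u > -M \,\Big|\, Z_{\mathcal S}\Bigr) \le (1-p_M)^{c2^{nd}},
\]
where $p_M=\Phi((-M-K)/\sigma_*)>0$. This is summable in $n$, so by Borel--Cantelli $\min_{D_n}Z_u \le -M$ eventually, a.s. Since the $D_n$ are nested, $\min_{D_n} Z_u$ is nonincreasing in $n$, so the limit exists. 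Intersecting over $M\in\mathbb N$ gives a limit of $-\infty$ a.s. The argument for the maximum is symmetric.
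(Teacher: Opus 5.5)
Your proof is correct, and your reading of (i) --- almost surely there are infinitely many (random) points of a fixed countable set with $Z_u<x$, respectively $Z_u>y$ --- is exactly what the paper's proof establishes.

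For (i) your argument is essentially the paper's. The paper takes $D_\infty=\bigcup_n D_n$ and applies Theorem~\ref{RPT} to $(D_\infty\cap\mathcal D^*)\setminus\mathcal S$ for a rational hypercube $\mathcal D^*$. Theorem~\ref{RPT} is itself proved by the same conditional second Borel--Cantelli argument that you inline.

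The real difference is in (ii). The paper gets it from (i) in one line: almost surely some $u\in D_\infty$ has $Z_u<-M$, and since $D_n\uparrow D_\infty$ this $u$ lies in $D_n$ for all large $n$. You instead prove a separate counting bound, $P(\min_{D_n}Z_u>-M\mid Z_{\mathcal S})\le(1-p_M)^{c2^{nd}}$, and apply the first Borel--Cantelli lemma. That is valid and gives a quantitative rate, exponential in the number of grid points falling in $B$. But it is more than is needed. Had you chosen the countable set in (i) to be $(D_\infty\cap B)\setminus\mathcal S$, (ii) would follow exactly as in the paper. Even on your route Borel--Cantelli is superfluous, since the events $\{\min_{D_n}Z_u>-M\}$ decrease in $n$.

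Conversely, your localisation to a region where $\tilde{\mathcal N}(\cdot)$ is constant is more careful than the paper's argument. In the proof of Theorem~\ref{RPT}, the paper appeals to continuity of the conditional mean and variance away from $\mathcal S$, which does not hold across the boundaries where the neighbour set switches. An even simpler way to get your uniform bounds is to work on a closed ball at positive distance from $\mathcal S$ and take the worst case over the finitely many possible neighbour sets. This removes the need for $u_0$ to be interior to a constant-neighbour region.

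One small slip: $\Phi((x-K)/\sigma_*)$ lower-bounds $P(Z_u<x\mid Z_{\mathcal S})$ only when $x\le K$; otherwise you would also need an upper bound on $\sigma_u$. Replacing $K$ by $\max\{K,|x|,|y|\}$ fixes this bound and the analogous one for $Z_u>y$.
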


The result in Theorem~\ref{RPT} raises serious concerns regarding the use of the NNGP to model %real-world spatial phenomena that are expected to vary smoothly. 
smooth spatial fields. In particular, it suggests that the sample paths of the NNGP are too irregular to adequately represent such processes. Theorem \ref{mMT} further shows that estimators of the pathwise maximum and minimum under the NNGP become increasingly biased as the resolution of the spatial grid increases. This bias grows unboundedly with the number of evaluation points. %As a consequence, applications involving extreme value analysis -- such as modeling temperature or precipitation maxima in climate studies \citep{sang2010continuous} -- may suffer from severe losses in statistical robustness.

Another critical property of the NNGP is the non-measurability of functionals that depend on the process evaluated over uncountable subsets of the domain such as path integrals and level sets. As a result, statistical models or analyses involving such quantities are, in a strict sense, mathematically invalid under the NNGP framework. In practical applications, this can lead to unstable or unreliable estimates. For example, approximating integrals of the type $\int_{\mathcal{D}}g(Z_u)du$ via partial sums yields limits that depend solely on the reference set and the number of neighbours. This underestimation of randomness undermines statistical robustness and interpretability. The following theorem formalises this issue.

\begin{theorem}
\label{CPS}
\textbf{Convergence of partial sums.}
Let $D_n = \{u_1, \ldots, u_{r_n}\}$ be a regular grid as defined in Theorem~\ref{mMT}. Assume that, for each $u_i\in D_n$, the conditional distribution of $Z_{u_i}$ is defined using the $m$ closest neighbors of $u_i$ in $\mathcal S$, with a deterministic tie-breaking rule. Then:
\begin{enumerate}[(i)]
\item
\[
\lim_{n \to \infty} \sum_{i=1}^{r_n} \frac{1}{r_n} Z_{u_i}
=
\left( \sum_{j=1}^{r} \left[ \lim_{n \to \infty} \sum_{i=1}^{r_n} \frac{a_{ij}}{r_n} \right] Z_{\mathfrak{s}_j} \right),
\qquad P\text{-a.s.},
\]
where, for each $j=1,\ldots,r$,
$\ds \left(\lim_{n \to \infty} \sum_{i=1}^{r_n} \frac{a_{ij}}{r_n}\right)
$ exists.

\item For any measurable function $g:\mathbb R\to\mathbb R$ such that $g(\mu+\sigma W)$ has finite mean and variance for all $(\mu,\sigma)\in\mathbb R\times\mathbb R^+$ and $W\sim N(0,1)$,
\[
\lim_{n \to \infty} \sum_{i=1}^{r_n} \frac{1}{r_n} g(Z_{u_i})
=
\lim_{n \to \infty} \sum_{i=1}^{r_n} \frac{\mu_{g,i}}{r_n},
\qquad P\text{-a.s.},
\]
where
\[
\mu_{g,i}
=
\mathbb E_W\!\left[
g\!\left(\sum_{j=1}^{r} a_{ij} Z_{\mathfrak{s}_j} + \sigma_i W\right)
\right],
\]
$\sigma_i^2=\mathrm{Var}(Z_{u_i}\mid Z_{\mathfrak{s}_1},\ldots,Z_{\mathfrak{s}_r})$, and $\ds
\left(\lim_{n \to \infty} \sum_{i=1}^{r_n} \frac{\mu_{g,i}}{r_n}\right)$ 
exists.
\end{enumerate}
\end{theorem}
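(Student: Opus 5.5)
The plan is to condition on the reference values $Z_{\mathcal S}$ and exploit the NNGP structure. Given $Z_{\mathcal S}$, the variables $\{Z_{u_i}: u_i\in D_n\setminus\mathcal S\}$ are independent, and $Z_{u_i}\mid Z_{\mathcal S}\sim N\bigl(\sum_{j} a_{ij}Z_{\mathfrak s_j},\sigma_i^2\bigr)$. Here $a_{ij}$ are the kriging weights computed from the parent covariance restricted to the $m$ nearest neighbours $\tilde{\mathcal N}(u_i)$, with $a_{ij}=0$ when $\mathfrak s_j\notin\tilde{\mathcal N}(u_i)$, and $\sigma_i^2$ is the corresponding kriging variance. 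For grid points that coincide with reference points we set $Z_{u_i}=Z_{\mathfrak s_j}$, so $a_{ij}=1$ and $\sigma_i=0$. There are at most $r$ such points, so they contribute $O(1/r_n)$ to any average with integrable summands and can be ignored in the limit. We may therefore write $Z_{u_i}=\sum_j a_{ij}Z_{\mathfrak s_j}+\sigma_i W_i$, where the $W_i$ are i.i.d.\ $N(0,1)$ and independent of $Z_{\mathcal S}$. Both parts then split into a deterministic-weights term and a conditionally independent noise term.

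\textbf{Existence of the weight limits.} Because ties are broken deterministically, the map $u\mapsto\tilde{\mathcal N}(u)$ is piecewise constant on the cells of the order-$m$ Voronoi tessellation of $\mathcal S$. These finitely many cells are bounded by finitely many hyperplanes, so their boundaries have Lebesgue measure zero. On each cell, $u\mapsto a_j(u)$ and $u\mapsto\sigma^2(u)$ are continuous functions of $u$ through the continuous covariance $C$, since the neighbour covariance matrix is fixed and positive definite. They are also bounded, because $\mathcal D$ is compact. Hence $a_j$ and $\sigma$ are bounded and continuous almost everywhere, i.e.\ Riemann integrable. Since $D_n$ is a regular grid refining to mesh zero, $r_n^{-1}\sum_i a_{ij}\to |\mathcal D|^{-1}\int_{\mathcal D}a_j(u)\,du$, which gives existence in (i).

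For (ii), condition on $Z_{\mathcal S}=z$. The function $u\mapsto\mu_g(u)=\mathbb E_W[g(\sum_j a_j(u)z_j+\sigma(u)W)]$ must be shown Riemann integrable. I expect this to be the main obstacle, because $g$ is only measurable. Still, $(\mu,\sigma)\mapsto \mathbb E\,g(\mu+\sigma W)$ is a Gaussian convolution of $g$. For $\sigma>0$ it is continuous by dominated convergence applied to the Gaussian density, using the finite-mean assumption and locally uniform density bounds. Moreover $\sigma(u)$ is bounded away from zero on compact sets away from $\mathcal S$. The behaviour near the finitely many reference points is a null set, and it is controlled by the boundedness of $\mathbb E g^2$ on compact parameter sets. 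One may need local uniform integrability, which I would verify by a direct bound. Under these conditions $\mu_g(u)$ is continuous a.e.\ and locally bounded, so its Riemann sums converge.

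\textbf{Noise term via a strong law.} It remains to show $r_n^{-1}\sum_i(\sigma_iW_i)\to0$ in (i) and $r_n^{-1}\sum_i\bigl(g(Z_{u_i})-\mu_{g,i}\bigr)\to0$ in (ii), almost surely. Conditionally on $Z_{\mathcal S}$, these are averages of independent, mean-zero terms with uniformly bounded variances, bounded by $\sup\sigma^2$ and $\sup_i\mathrm{Var}(g(\cdot))$ respectively. The nested grids make this a triangular array, though the terms at a fixed location persist across $n$. Chebyshev's inequality gives $P(|\cdot|>\epsilon)\le K/(\epsilon^2 r_n)$, and $r_n$ grows geometrically (multiplied by $2^d$ each step), so $\sum_n 1/r_n<\infty$. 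Borel--Cantelli then yields conditional a.s.\ convergence to zero. Integrating over $Z_{\mathcal S}$ gives $P$-a.s.\ convergence. Combining this with the deterministic Riemann limits yields both displayed identities.
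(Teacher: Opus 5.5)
Your proposal is correct and follows the paper's proof essentially step for step. It uses the same representation $Z_{u_i}=\sum_j a_{ij}Z_{\mathfrak s_j}+\sigma_iW_i$, and the same Riemann-integrability argument for $a_j$ and for $h(u)=\mathbb E_W[g(\mu(u)+\sigma(u)W)]$, based on the finite partition of $\mathcal D$ induced by the $m$-nearest-neighbour sets with null boundaries. It also uses a strong law for the conditionally independent noise. The only real difference is that you derive that strong law yourself, via Chebyshev and Borel--Cantelli along the geometrically growing $r_n$, where the paper cites R\'ev\'esz's Theorem~2.9.2. Both versions rely on a uniform bound for $\mathrm{Var}\,g(\mu_i+\sigma_iW)$ and for $|h|$ near the reference points, where $\sigma_i\to0$; the pointwise finiteness assumed of $g$ does not by itself give this (for $g(x)=|x|^{-1/4}$, $\mathrm{Var}\,g(\sigma W)\to\infty$ as $\sigma\to0$), so an extra argument is needed there.
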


This theorem highlights a troubling aspect: although integrals like $\int_{\mathcal{D}} g(Z_u) du$ are not well-defined under the NNGP (due to non-measurability), their discrete approximations converge to quantities that depend only on $Z_{\mathcal{S}}$ and $m$. This dependency undermines the robustness of the NNGP by introducing sensitivity to both the choice of the reference set  and the maximum number of neighbours. For instance, setting $g(Z_u) = \mathbbm{1}_A(Z_u)$, where $\mathbbm{1}_A$ is the indicator function over a set $A\subset \mathbb{R}$, gives a discrete approximation for the volume of the (non-measurable) level set $\mathcal{D}_A(Z)=\{u \in \mathcal{D} : Z_u \in A\}$ -- a quantity that is not well-defined under the NNGP but commonly used in spatial statistics.

% Moreover, the sparse conditional structure of the NNGP can affect standard procedures such as maximum likelihood and Bayesian estimation. For example,
% \citet{Michelin2025fast} showed that in practical applications, such as porosity estimation in petroleum reservoirs, the NNGP underestimates posterior variances. This occurs because the conditional independence assumptions reduce the effective dependence among observations, making the data appear more informative than they truly are. A similar effect should be expected for the variance of maximum likelihood estimators, implying that both credibility and confidence intervals may be too narrow.

Alternative models based on modified NNGP structures or partitioning schemes 
\citep{BNNGP1, BNNGP2, zheng2021nearest, zheng2023nearest} are subject to analogous theoretical concerns. 
The RNGP differs from the NNGP only through the construction of the neighbour sets $\mathcal{N}(u)$, yet it also imposes conditional independence on the infinite-dimensional remainder given $X_{\mathcal{S}}$, and thus inherits the same foundational limitations.

The same assumption is made in the ProSpar-GP process \citep{LiMak2025ProSparGP}, which likewise induces the same issues. 
While ProSpar-GP satisfies Kolmogorov's existence conditions and is specifically designed for massive nonstationary datasets via a combination of sparse GP components, its construction still relies on imposing conditional independence of the infinite-dimensional remainder after specifying the parent GP finite-dimensional distribution at a set of inducing points.

Alternative covariance approximation strategies --such as low-rank methods, process convolutions, and various Vecchia-based approaches-- are widely used in the spatial statistics literature due to their computational efficiency and ability to produce smoother sample paths. However, these methods typically define approximations only at a finite set of locations and do not induce infinite-dimensional stochastic processes. This limits their theoretical coherence and can hinder rigorous inference for path-level quantities. Low-rank approaches, in particular, often require repeated matrix decompositions or recomputations during inference, complicating the assessment and control of approximation error. Additionally, many Vecchia-based techniques rely on a fixed ordering of spatial locations, which makes them incompatible with retrospective sampling schemes that are essential for exact inference in infinite-dimensional models.

\subsection{Didactic example: the Brownian bridge} \label{ssec.:SBB}

We start illustrating the limitations of the NNGP using a simple one-dimensional example: the standard Brownian bridge (BB), which is a Brownian motion on $[0,1]$ conditioned to start and end at zero. For the BB, key path-level functionals such as the integral and maximum have known marginal distributions.

Let $Z \sim \text{BB}(0,1)$. Then:
\begin{itemize}
\item The path integral $\int_{0}^1Z_udu \sim N(0, 1/12)$.
\item The maximum has density $\pi(x) = 4x e^{-2x^2}, x > 0$, with mean $\sqrt{\pi/8}$ and variance $(4-\pi)/8$.
\end{itemize}

Under the NNGP measure induced by the $\text{BB}(0,1)$, the distribution of $Z$ at the reference set remains the same as in the parent process due to the Markov property. Without loss of generality, let the reference set $\mathcal{S} = \{\mathfrak{s}_1, \ldots, \mathfrak{s}_r\}$ be a regular grid of $r$ points in the interval $(0,1)$, with $r$ even. For each $n \in \mathbb{N}$, define $D_n$ as the set of points forming a regular grid with $2^n$ equally spaced locations within each subinterval $(\mathfrak{s}_j, \mathfrak{s}_{j+1})$. Consider the inequality:
$$
\bar{Z}_m := \max_{u \in D_n} Z_u \geq \max_{u \in D_n^*} Z_u =: \bar{\bar{Z}}_m,
$$
where $D_n^* = \left\{ u \in D_n : \mathfrak{s}_{(r-1)/2} + \epsilon/4 \leq u \leq \mathfrak{s}_{(r+1)/2} - \epsilon/4 \right\}$ and $\epsilon = (r+1)^{-1}$.

Under the NNGP measure, and conditional on $Z_{\mathfrak{s}_{(r-1)/2}}$ and $Z_{\mathfrak{s}_{(r+1)/2}}$, the variable $\bar{\bar{Z}}_m$ is the maximum of independent Gaussian variables. Each of these has a mean given by a convex combination of $Z_{\mathfrak{s}_{(r-1)/2}}$ and $Z_{\mathfrak{s}_{(r+1)/2}}$, with weights bounded away from zero and one. Specifically, the smaller weight lies in the interval $(\epsilon/4, \epsilon/2)$. The variances of these variables lie in the range $(3\epsilon/16, \epsilon/4)$.
It follows directly that:
$$
\lim_{n \rightarrow \infty} \bar{Z}_m \geq \lim_{n \rightarrow \infty} \bar{\bar{Z}}_m = \infty.
$$

Figure \ref{fig_BB} displays the empirical distributions of the path integral and the maximum of a standard Brownian bridge under the NNGP approximation with $r = 5$, across different values of $n$ and based on a sample size of 10,000. On the left-hand side, we observe that while the approximation of the path integral distribution improves with increasing $n$, it continues to underestimate the variance. On the right-hand side, the distribution of the maximum deviates further from the true distribution as $n$ increases. These results confirm that the NNGP does not preserve the distributional properties of the Brownian bridge, particularly with respect to the maximum, where the approximation deteriorates more markedly.

\begin{figure}[!h]
\centering
\includegraphics[height=.25\textheight, width=1\linewidth]{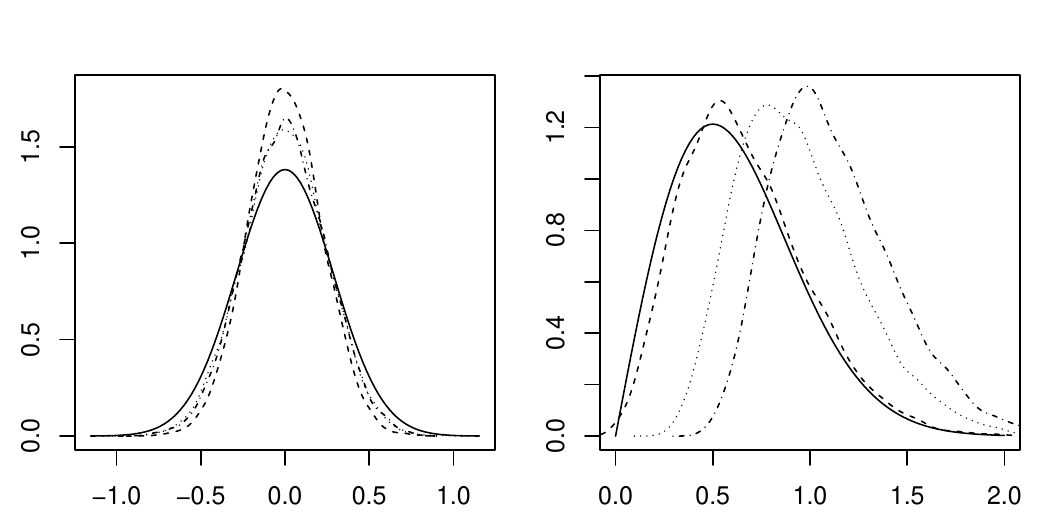}
\caption{Exact (solid) and NNGP empirical densities for the path integral (left) and maximum (right). Values of $n$ are $3$ (dashed), $9$ (dotted), and $14$ (dot-dashed).}
\label{fig_BB}
\end{figure}

A more complex two-dimensional example that illustrates similar issues is presented in Section \ref{subs_NENNGP}, along with the results of the proposed PCGP introduced in the next section.

\section{The piecewise continuous GP}\label{sec.:PCGP}

To overcome the statistical deficiencies of sparse GPs such as the NNGP, while maintaining computational efficiency, we propose a novel sparse Gaussian process construction that preserves the key probabilistic properties of a full GP to the greatest possible extent. This new process, which we call the piecewise continuous GP (PCGP), ensures piecewise continuity and maintains all the measurability properties of traditional GPs. Consequently, it can be effectively used in geostatistical models that rely on global path functionals, including spatial fusion, spatial quantile regression, and point process models.

The PCGP is defined hierarchically and exploits conditional independence structures similarly to sparse Gaussian process approximations. 
Unlike NNGP- or RNGP-type constructions, however, conditional independence is imposed across regions of positive volume rather than at the level of individual locations.

Let $X=\{X_u:u\in \mathcal{D}\}$ be a (parent) Gaussian process on a domain $\mathcal{D}\subset\mathbb{R}^d$ with probability measure $P$. 
Partition $\mathcal{D}$ into $K$ disjoint congruent convex polytopes $\{\mathcal{D}_1,\ldots,\mathcal{D}_K\}$ such that
\[
\mathcal{D}=\bigcup_{k=1}^K \mathcal{D}_k, \qquad \mathcal{D}_k^\circ \cap \mathcal{D}_\ell^\circ = \varnothing \ \text{for } k\neq \ell,
\]
where $\mathcal{D}_k^\circ$ is the interior of $\mathcal{D}_k$, and let $\mathcal{S}=\{\mathfrak{s}_1,\ldots,\mathfrak{s}_r\}\subset \mathcal{D}$ denote a finite reference set.

The PCGP is the stochastic process $Z$ with probability measure $\tilde P$ defined by the factorization
\begin{equation*}
\tilde P(dz)
=
\tilde P(dz_{\mathcal{S}})
\otimes
\bigotimes_{k=1}^K
P\!\left(dz_{\mathcal{D}_k}\mid z_{\mathcal{N}(\mathcal{D}_k)}\right),
%\label{eq:pcgp-factorization}
\end{equation*}
where $\mathcal{N}(\mathcal{D}_k)$ is a selected subset of $\mathcal{S}$, and the marginal distribution on the reference set satisfies
\begin{equation}
\frac{d\tilde P}{d\nu^r}(z_{\mathcal{S}})
=
\tilde\pi(z_{\mathcal{S}})
:=
\prod_{i=1}^r
\pi\!\left(z_{\mathfrak{s}_i}\mid z_{\mathcal{N}(\mathfrak{s}_i)}\right).
\label{eq:pcgp-reference}
\end{equation}
Here, $\nu$ denotes Lebesgue measure on $\mathbb{R}$, $\mathcal{N}(\mathfrak{s}_i)$ is any user-specified subset of $\{\mathfrak{s}_1,\ldots,\mathfrak{s}_{i-1}\}$ and $\pi$ represents the respective density under the parent GP measure.
%The conditional distributions in \eqref{eq:pcgp-reference} are those induced by the parent Gaussian process.
Choosing $\mathcal{N}(\mathfrak{s}_i)=\{\mathfrak{s}_1,\ldots,\mathfrak{s}_{i-1}\}$ recovers the exact GP marginal on $\mathcal{S}$, while sparse choices (as defined in the NNGP and RNGP) yield computational savings.

Conditional on $Z_{\mathcal{S}}$, the restrictions $\{Z_{\mathcal{D}_k}\}_{k=1}^K$ are mutually independent.
For each region $\mathcal{D}_k$, the conditional distribution $P(dz_{\mathcal{D}_k}\mid z_{\mathcal{N}(\mathcal{D}_k)})$ is the Gaussian process obtained by restricting the parent GP to $\mathcal{D}_k$ and conditioning on a subset $\mathcal{N}(\mathcal{D}_k)$ of $\mathcal{S}$.
Typically, this subset consists of the $\ddot m$ reference locations closest to the centroid of $\mathcal{D}_k$.

Under this construction, sample paths are almost surely continuous within each region $\mathcal{D}_k$, while discontinuities  occur across region boundaries.
Owing to the ordered neighbour structure imposed on $Z_{\mathcal{S}}$ and the fact that, conditional on $Z_{\mathcal{S}}$, the restrictions $\{Z_{\mathcal{D}_k}\}_{k=1}^K$ are specified by the parent Gaussian process, the resulting process is itself a (generally non-stationary) Gaussian process on $S$.
The resulting process is therefore piecewise continuous on $S$.
Boundary points are assigned deterministically to one of the adjacent regions, ensuring a well-defined process on all of $S$.

As an illustration, let $\mathcal{D}=[0,1]^2$ and partition it into a $10\times10$ regular grid, yielding $K=100$ regions.
Let $\mathcal{S}$ be the $25\times25$ regular grid on $S$, and assume an NNGP structure on $\mathcal{S}$ with $m$.
For each region $\mathcal{D}_k$, let $\mathcal{N}(\mathcal{D}_k)$ denote the $\ddot m$ reference locations closest to its centroid.

For any finite set $D_0\subset \mathcal{D}\setminus\mathcal{S}$, the joint density of $Z_{D_0}$ under the PCGP is
\[
\tilde{\pi}(z_{D_0})
=
\int
\prod_{k: D_0\cap \mathcal{D}_k\neq\varnothing}
\pi\!\left(z_{D_0\cap \mathcal{D}_k}\mid z_{\mathcal{N}(\mathcal{D}_k)}\right)
\tilde\pi(z_{\mathcal{S}})
\,dz_{\mathcal{S}}.
\]

Algorithm~2 describes a constructive procedure for simulating finite-dimensional realisations of the PCGP at any prescribed set of locations.

\begin{figure}[!htbp]
\begin{center}
\fbox{%
\begin{minipage}{13cm}
\centering
\emph{Algorithm 2: Simulation from the PCGP}
\vspace{0.5em}

\textbf{Input:} Parent GP covariance function; domain $\mathcal{D}$; partition $\{\mathcal{D}_k\}_{k=1}^K$; reference set $\mathcal{S}=\{\mathfrak{s}_1,\ldots,\mathfrak{s}_r\}$; neighbour sizes $m$ and $\ddot m$; target set $D_0\subset \mathcal{D}\setminus\mathcal{S}$.

\vspace{0.3em}
\textbf{Output:} Simulated values $\{Z_u : u\in D_0\}$.

\begin{enumerate}
  \item \textbf{Simulate the reference process:}
For $i=1,\ldots,r$, sample
    $
    Z_{\mathfrak{s}_i}
    \sim
    \pi\!\left(\cdot \mid Z_{\mathcal{N}(\mathfrak{s}_i)}\right).
    $

  \item \textbf{Simulate within regions:}
  For each region $\mathcal{D}_k$ such that $D_0\cap \mathcal{D}_k\neq\varnothing$, sample
    $
    Z_{D_0\cap \mathcal{D}_k}
    \sim
    \pi\!\left(\cdot \mid Z_{\mathcal{N}(\mathcal{D}_k)}\right).
    $
\end{enumerate}

\end{minipage}}
\end{center}
\end{figure}

We focus on the common setting in which conditional independence is induced through nearest-neighbour structures. Specifically, $\mathcal{N}(\mathfrak{s}_i)$ denotes the set of the $m$ nearest locations to $\mathfrak{s}_i$ among the previously ordered reference locations, while $\mathcal{N}(\mathcal{D}_k)$ consists of the $\ddot m$ reference locations closest to the centroid of region $S_k$. The choice of $\ddot m$ plays an important role in preserving the correlation structure of the parent Gaussian process. For example, it is reasonable to set the number of reference points from $\mathcal{S}$ lying within any given region $S_k$ to be small relative to $\ddot m$, which ensures substantial overlap among neighbouring conditioning sets. In practice, taking $\ddot m = m$ is often sufficient and performs well empirically.

The number of regions $K$ governs the trade-off between computational efficiency and approximation fidelity. A practical guideline is to choose $K$ so that the number of non-reference locations to be simulated within each region $\mathcal{D}_k$ remains moderate, for example on the order of a few hundred.

The piecewise continuity of the PCGP yields both theoretical and practical advantages, alleviating key limitations of existing sparse GP constructions. Notably, in the Brownian bridge setting considered in Section~\ref{ssec.:SBB}, the PCGP recovers the exact parent process as a direct consequence of the Markov property.

Existence of the PCGP on a measurable space as rich as that of the parent GP follows directly from the finite dimensionality of 
$\mathcal{S}$ and the existence of a GP on each polytope, as formalised in the theorem below.

\begin{theorem}[Existence of the PCGP]
\label{Exst_PCGP}
Let $X$ be a Gaussian process on $\mathcal{D}\subset\mathbb{R}^d$ with positive definite covariance function and continuous sample paths, inducing a Borel probability measure on $C(S)$.
Let $\{\mathcal{D}_k\}_{k=1}^K$ be a partition of $\mathcal{D}$, and for each $k$ let $
C_k := C(\mathcal{D}_k)$ denote the Banach space of continuous real-valued functions on $\mathcal{D}_k$ equipped with the supremum norm and its Borel $\sigma$-algebra $\mathcal C_k$.
Then the PCGP construction defines a probability measure on the product measurable space
\[
\Big(\bigotimes_{k=1}^K C_k,\ \bigotimes_{k=1}^K \mathcal C_k\Big).
\]
\end{theorem}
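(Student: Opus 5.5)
The plan is to build $\tilde P$ as a mixture: first a finite-dimensional Gaussian law on $\mathbb{R}^r$ for $z_{\mathcal{S}}$, then a measurable family of product measures on $\bigotimes_k C_k$ indexed by $z_{\mathcal{S}}$. Integrating the kernel against the reference law gives the measure on the product space. Concretely, the construction proceeds in three steps.

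\textbf{Step 1 (reference law).} The density $\tilde\pi$ in \eqref{eq:pcgp-reference} is a product of Gaussian conditional densities. Each factor integrates to one in $z_{\mathfrak{s}_i}$, and the conditioning sets only involve earlier indices. Integrating successively over $z_{\mathfrak{s}_r},\ldots,z_{\mathfrak{s}_1}$ therefore shows that $\tilde\pi$ is a probability density on $\mathbb{R}^r$, indeed a Gaussian one. Positive definiteness of $C$ guarantees that every conditional variance is strictly positive, so every factor is a nondegenerate normal density. Call the resulting measure $\tilde P_{\mathcal{S}}$.

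\textbf{Step 2 (regional conditional laws).} Fix $k$ and write $N_k=\mathcal{N}(\mathcal{D}_k)$. Conditioning on the finite vector $X_{N_k}$ is explicit in the Gaussian case:
\[
X_u=\mathbb{E}[X_u\mid X_{N_k}]+R_u,\qquad \mathbb{E}[X_u\mid X_{N_k}]=c_{u}^\top\Sigma_{N_k}^{-1}X_{N_k},
\]
where $c_u=(C(u,s))_{s\in N_k}$. The residual process $R$ is independent of $X_{N_k}$ and has sample paths $X_u-c_u^\top\Sigma_{N_k}^{-1}X_{N_k}$. These paths are continuous, because $X$ has continuous paths and $u\mapsto c_u$ is continuous. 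Hence the law $Q_k$ of $R|_{\mathcal{D}_k}$ is a Borel probability measure on $C_k$. It is the image of the parent measure on $C(\mathcal{D})$ under the continuous map $x\mapsto (x-c_\cdot^\top\Sigma_{N_k}^{-1}x_{N_k})|_{\mathcal{D}_k}$. The restriction map to $\mathcal{D}_k$ is 1-Lipschitz in sup norm, hence Borel. Since $\mathcal{D}_k$ is a closed polytope, restriction is the natural map into $C_k$.

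Define the kernel
\[
\kappa_k(z_{\mathcal{S}},\cdot)=Q_k\circ \tau_{m_k(z_{\mathcal{S}})}^{-1},
\]
where $m_k(z_{\mathcal{S}})(u)=c_u^\top\Sigma_{N_k}^{-1}z_{N_k}$ is a continuous function on $\mathcal{D}_k$ and $\tau_f(g)=f+g$. This is exactly $P(dz_{\mathcal{D}_k}\mid z_{N_k})$. It matches the Gaussian conditional finite-dimensional distributions, and these determine a measure on $C_k$ because cylinder sets generate $\mathcal{C}_k$ on the separable space $C_k$.

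\textbf{Step 3 (assembly).} Set
\[
\tilde P(A)=\int_{\mathbb{R}^r}\Big(\bigotimes_{k=1}^K\kappa_k(z_{\mathcal{S}},\cdot)\Big)(A)\,\tilde P_{\mathcal{S}}(dz_{\mathcal{S}}),
\]
for $A\in\bigotimes_k\mathcal{C}_k$, optionally also retaining the coordinate $z_{\mathcal{S}}$ in the product. Finite products of probability measures on Polish spaces exist, and $\bigotimes_k\mathcal{C}_k$ coincides with the Borel $\sigma$-algebra of the Polish product. Countable additivity then follows from monotone convergence, and total mass is one.

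The main obstacle is verifying that $z_{\mathcal{S}}\mapsto(\bigotimes_k\kappa_k)(z_{\mathcal{S}},A)$ is measurable for every $A$. I would handle it as follows. For rectangles $A=A_1\times\cdots\times A_K$ the map is $\prod_k Q_k(A_k-m_k(z_{\mathcal{S}}))$. The map $z\mapsto m_k(z)$ is linear, hence continuous, from $\mathbb{R}^r$ to $C_k$. The map $f\mapsto Q_k(A_k-f)$ is Borel in $f$: the function $(f,g)\mapsto \mathbbm{1}_{A_k}(f+g)$ is jointly Borel on $C_k\times C_k$, and Fubini applies. Measurability for general $A$ then extends from rectangles by a $\pi$–$\lambda$ argument. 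Finally I would note that, restricted to boundary points assigned to a single region, the finite-dimensional marginals of $\tilde P$ are the PCGP densities displayed before Algorithm~2.
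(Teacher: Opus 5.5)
Your proof is correct and shares the paper's skeleton. Both build a reference law $P_1$ on $\mathbb{R}^r$ and one kernel per region. Both check measurability of the product kernel $\bigotimes_k \kappa_k(z,\cdot)$ on rectangles and extend it by a Dynkin ($\pi$--$\lambda$) argument. Both then form the mixture $\tilde P(A)=\int P_{2,z}(A)\,P_1(dz)$, with or without the $\mathbb{R}^r$ coordinate, as in the paper's Steps 2a/2b.

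The genuine difference is how you obtain the regional kernel. The paper invokes the existence of regular conditional distributions for the Borel law of $X$ on the Polish space $C(\mathcal{D})$, given finitely many evaluations. This is short and uses no Gaussian structure. However, such a kernel is only determined up to null sets of the parent law of $X_{\mathcal{N}(\mathcal{D}_k)}$. Integrating it against the different law $P_1$ therefore tacitly uses that $P_1$ is absolutely continuous with respect to that parent law. This holds here, since both are nondegenerate Gaussians, but the paper does not say so.

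You instead construct an everywhere-defined version explicitly: the translate of the fixed residual law $Q_k$ by the continuous conditional-mean function $m_k(z)$. You then prove measurability directly, from continuity of $z\mapsto m_k(z)$ plus Tonelli on $C_k\times C_k$. That step needs $C_k$ to be separable, which is why your choice of closed polytopes $\mathcal{D}_k$ matters (you defer boundary assignment to the finite-dimensional marginals).

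In return you get a concrete kernel that matches Algorithm~2 and the displayed finite-dimensional density. Your Step~1 also verifies that $\tilde\pi$ is a genuine probability density, which the paper takes for granted.
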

The proof of Theorem~\ref{Exst_PCGP} is presented in Appendix~B.

Suppose the process is to be simulated at $n$ locations uniformly distributed over the domain $\mathcal{D}$, and the user wishes to have approximately $N$ of these locations in each $\mathcal{D}_k$. This implies setting the number of polytopes to $K \approx n / N$.
The computational cost of simulating, for example, an NNGP at $n$ locations is $\mathcal{O}(r m^3 + n m^3)$, where $r$ is the size of the reference set and $m$ is the number of neighbours. The cost for the PCGP is also linear in $n$ and, if no neighboring structure is used, is given by $
\mathcal{O}\left( r^3 + n \cdot \frac{(N + r)^3}{N} \right).
$
If the NNGP neighbouring structure is used, the cost is
$
\mathcal{O}\left( r m^3 + n \cdot \frac{(N + \ddot{m})^3}{N} \right).
$

If the user finds the PCGP paths insufficiently smooth due to discontinuities along the partition boundaries, a mixed PCGP (mPCGP) can be employed. The mPCGP combines multiple PCGPs with different partitions to smooth the boundaries. For example, one partition may have centroids placed at the vertices of another. Let $\tilde{\tilde{P}}$ denote the probability measure of the mPCGP and $\tilde{\tilde{\pi}}$ the induced densities. Formally, for $G$ different partitions, the mPCGP is defined as
\begin{eqnarray*}
\tilde{\tilde{\pi}}(z_{\mathcal{S}}) &=& \tilde{\pi}(z_{\mathcal{S}}), \\[4pt]
\tilde{\tilde{P}}\big(dz^{(1)},\ldots,dz^{(G)} \mid z_{\mathcal{S}}\big)
&=&
\bigotimes_{j=1}^{G}
\tilde{P}_j\big(dz^{(j)} \mid z_{\mathcal{S}}\big), \\[4pt]
Z_u &=& \frac{1}{G} \sum_{j=1}^{G} Z_u^{(j)},
\qquad
u \in \mathcal{D} \setminus \mathcal{S},
\end{eqnarray*}
where $\tilde{P}_j$ denotes the probability measure of the PCGP associated with the $j$-th partition and $Z^{(j)}$ is a realization from $\tilde{P}_j$. 
The covariance function associated with each $\tilde{P}_j$ is scaled by a factor $G$ so that the resulting covariance of the averaged process approximates the covariance function of a single PCGP.

For partitions with comparable sizes and areas, the computational cost of the mPCGP is, at most,
\[
O\left( r^3 + n G \frac{(N + r)^3}{N} \right).
\]

Figure S.1 in Supplementary Material S.1 illustrates examples of partitions with $G = 2$ and $G = 4$, using squares in $R^2$. An example comparing the PCGP to the mPCGP with $G=4$ is presented in the next section.

%These limitations are also present in the ProSpar-GP approach \citep{LiMak2025ProSparGP}, which -- while satisfying Kolmogorov's conditions -- is specifically tailored to massive nonstationary datasets through a combination of sparse GP components. Its sparse structure is constructed by first considering the full GP finite-dimensional distribution at a set of inducing points and then assuming conditional independence of all remaining locations given the process at those points. Such a construction can be naturally integrated with the piecewise continuous formulation of the PCGP, enabling a variant capable of approximating nonstationary GPs while preserving computational scalability and avoiding the theoretical concerns.

\section{Numerical example - spatial GP}\label{subs_NENNGP}

We present a series of numerical experiments to illustrate the practical consequences of the theoretical results regarding the probabilistic properties of existing sparse GPs and of the PCGP. 
In particular, we compare the NNGP with the PCGP using the same neighbouring structure.

We consider the bidimensional region \(\mathcal{D} = [0,10] \times [0,10]\), a reference set \(\mathcal{S}\) with \(r = 1000\) uniformly chosen locations, and a number \(m = 15\) of neighbours, following the recommendations in \cite{NNGP}. A fictitious dataset \(Y_{\mathcal{S}}\) is simulated from the NNGP approximation to a parent process with stationary mean $\mu=0$ and variance $\sigma^2=1$, and isotropic correlation function \(\rho(d) = \exp\left\{-\left(\frac{d}{\phi}\right)^{\nu}\right\}\), where \(\nu = 1.9\) and \(\phi^{\nu} = 4\). Five regular grids of locations are considered to simulate 400 replications of the process, conditional on the data \(Y_{\mathcal{S}}\), and compute various functions of it. The grids contain 625, 2500, 10,000, 40,000, and 160,000 locations each. The number $K$ of squares in the PCGP is \(16^2\) and $\ddot{m}=15$.

Figure~\ref{sim1} presents heat maps of a process realisation on a grid with 160,000 locations for both the NNGP and PCGP models (with common \(Z_{\mathcal{S}}\)). On the left-hand side of the figure, a rough transition among the color patterns is evident compared to the right-hand side, particularly along the borders of regions with higher gradients. This behavior is consistent with the theoretical results established in Theorem~\ref{RPT}. Additionally, the NNGP path is significantly noisy, highlighting its inherent irregularity.

The high path irregularity of the NNGP becomes more pronounced as the size of the reference set ($r$) or the number of neighbours ($m$) decreases. Figure~S.2 in Supplementary Material S.2 illustrates this behavior for $r = 200$ and $m = 5$. It is important to note that the datasets in the first experiment differ from those in the appendix due to the changes in $r$, $m$ and $\ddot{m}$. As $r$, $m$ and $\ddot{m}$ decrease, both the NNGP and PCGP paths exhibit more pronounced discontinuities. However, while the NNGP is everywhere discontinuous, the PCGP displays discontinuities only along the boundaries of the square regions.

Although the magnitude of the PCGP path discontinuities is influenced by the choices of $r$, $m$ and $\ddot{m}$, the path smoothness shows little variation once these parameters surpass certain thresholds. Figures~S.3 and S.4 in Supplementary Material S.2 illustrate this phenomenon, displaying simulated paths for $r = 1{,}000$, $2{,}000$, and $3{,}000$ when $m = \ddot{m} = 15$, and for $m = \ddot{m} = 15$, $30$, and $60$ when $r = 1{,}000$.
\begin{figure}[!h]
    \centering
    \includegraphics[height=.25\textheight, width=1\linewidth]{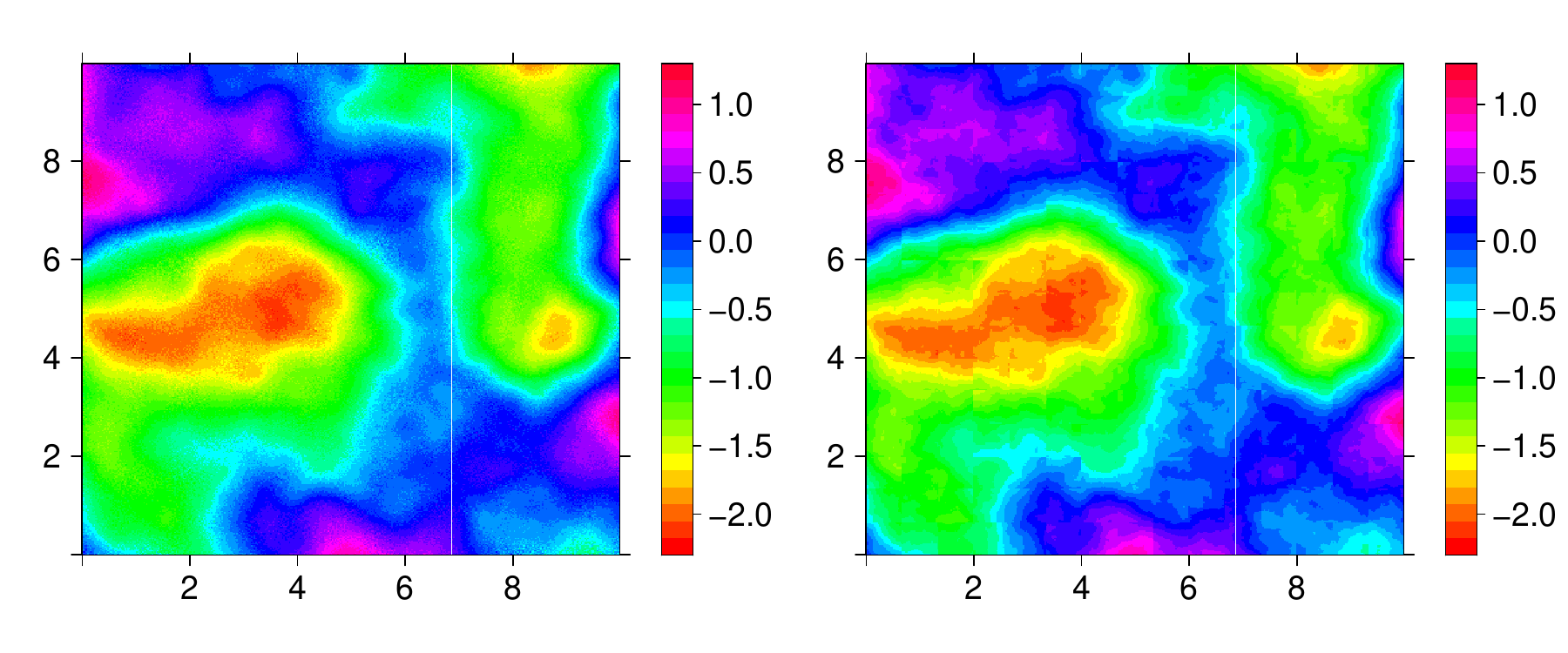}
    \caption{Heat map of one realisation of the process at a regular grid of 160,000 locations. NNGP on the left and PCGP on the right.}
    \label{sim1}
\end{figure}

We further compare the NNGP and PCGP by analyzing the following functions of the process over the regular grid:
\[
    h_1(Z) = \frac{\mu(\mathcal{D})}{M} \sum_{j=1}^{M} Z_{u_j}, \quad
    h_2(Z) = \frac{1}{M} \sum_{j=1}^{M} \mathbbm{1}_{A_i}(Z_{u_j}), \quad
    h_3(Z) = \left( \min_j Z_{u_j}, \max_j Z_{u_j} \right),
\]
where \(\mu(\mathcal{D})\) represents the area of \(\mathcal{D}\), \(M\) denotes the number of locations in the grid, and \(A_i\) (\(i = 1, 2, 3\)) are the intervals \(A_1 = (0, 0.5)\), \(A_2 = (1, 2)\), and \(A_3 = (-2.5, 0)\).

To compare the methods with the parent GP, we obtained Monte Carlo estimates by simulating replications of this on a regular grid of 6,400 points. Those estimates are considerably precise because of the smoothness of the parent GP.

Tables~\ref{tab1} and~\ref{tab2} summarise key statistics for the \(h\) functions across 400 replications for all five grids, considering the NNGP and PCGP. The results for \(h_1\) and \(h_2\) align with the theoretical findings of Theorem \ref{CPS}, as the variances of these functions are significantly smaller for the NNGP. Moreover, as \(M\) increases, these variances decrease substantially for the NNGP while stabilizing in values reasonably close to the ones of the parent GP for the PCGP. Note that, because $Z_{\mathcal{S}}$ is fixed here, Theorem \ref{CPS} implies that the variance under the NNGP converges to zero.

The results for \(h_3\) support Theorem~\ref{mMT}, which shows that the minimum and maximum of the NNGP path move towards \(-\infty\) and \(+\infty\), respectively, as \(M\) increases. This result aligns with what is observed in the right panel of Figure~\ref{fig_BB}. On the other hand, the estimates for the PCGP are fairly close to those for the parent GP. These empirical findings are even more pronounced in Tables~S.1 and~S.2 in Supplementary Material S.2, which explore scenarios with smaller values of \(r\) and \(m\).

\begin{table}[!h]
\resizebox{\columnwidth}{!}{%
\begin{tabular}{|r|rr|rr|}
\hline
\multicolumn{1}{|c|}{\textbf{M}} & \multicolumn{2}{c|}{\textbf{$h_1$}}    & \multicolumn{2}{c|}{\textbf{$h_3$}}   \\ \hline
\multicolumn{1}{|c|}{\textbf{}}  & \multicolumn{1}{c|}{\textbf{NNGP}}            & \multicolumn{1}{c|}{\textbf{PCGP}} & \multicolumn{1}{c|}{\textbf{NNGP}}            & \multicolumn{1}{c|}{\textbf{PCGP}}  \\ \hline
\textbf{625}     & \multicolumn{1}{c|}{31.99 (0.161)} & 31.95 (0.180) & \multicolumn{1}{c|}{\{-1.991,2.279\} (0.021,0.039)} & \{-1.990,2.271\} (0.023,0.040)      \\ \hline
\textbf{2,500}   & \multicolumn{1}{c|}{29.40 (0.085)} & 29.39 (0.143) & \multicolumn{1}{c|}{\{-2.016,2.309\} (0.023,0.028)} & \{-2.028,2.303\} (0.030,0.026)     \\ \hline
\textbf{10,000}  & \multicolumn{1}{c|}{28.11 (0.043)} & 28.08 (0.130) & \multicolumn{1}{c|}{\{-2.042,2.334\} (0.022,0.020)} & \{-2.041,2.319\} (0.026,0.023)     \\ \hline
\textbf{40,000}  & \multicolumn{1}{c|}{27.46 (0.021)} & 27.41 (0.135) & \multicolumn{1}{c|}{\{-2.064,2.356\} (0.019,0.020)} & \{-2.062,2.329\} (0.033,0.013)     \\ \hline
\textbf{160,000} & \multicolumn{1}{c|}{27.13 (0.010)} & 27.09 (0.129) & \multicolumn{1}{c|}{\{-2.088,2.377\} (0.021,0.017)} & \{-2.063,2.332\} (0.030,0.022)       \\ \hline
\textbf{Full GP} & \multicolumn{2}{c|}{28.40 (0.166)} & \multicolumn{2}{c|}{\{-2.034,2.316\} (0.028,0.023)}       \\ \hline
\end{tabular}}
\caption{Mean and standard deviation, in parenthesis, of statistics $h_1$ and $h_3$ over 400 replications for different grid sizes.}
\label{tab1}
\end{table}

\begin{table}[!h]
\resizebox{\columnwidth}{!}{%
\begin{tabular}{|r|rr|rr|rr|}
\hline
\multicolumn{1}{|c|}{\textbf{M}} & \multicolumn{2}{c|}{\textbf{$h_2$, $A_1$}}                                           & \multicolumn{2}{c|}{\textbf{$h_2$, $A_2$}}                                           & \multicolumn{2}{c|}{\textbf{$h_2$, $A_3$}}                                           \\ \hline
\multicolumn{1}{|c|}{\textbf{}}  & \multicolumn{1}{c|}{\textbf{NNGP}}            & \multicolumn{1}{c|}{\textbf{PCGP}} & \multicolumn{1}{c|}{\textbf{NNGP}}            & \multicolumn{1}{c|}{\textbf{PCGP}} & \multicolumn{1}{c|}{\textbf{NNGP}}            & \multicolumn{1}{c|}{\textbf{PCGP}} \\ \hline
\textbf{625}        & \multicolumn{1}{r|}{0.101 (0.004)} & \multicolumn{1}{r|}{0.102 (0.004)}      & \multicolumn{1}{r|}{0.287 (0.003)} & \multicolumn{1}{r|}{0.287 (0.004)} & \multicolumn{1}{r|}{0.344 (0.002)} & \multicolumn{1}{r|}{0.344 (0.002)}          \\ \hline
\textbf{2,500}     & \multicolumn{1}{r|}{0.098 (0.002)} & \multicolumn{1}{r|}{0.098 (0.002)}      
 & \multicolumn{1}{r|}{0.283 (0.002)} & \multicolumn{1}{r|}{0.283 (0.002)} & \multicolumn{1}{r|}{0.361 (0.001)} & \multicolumn{1}{r|}{0.361 (0.001)}         \\ \hline
\textbf{10,000}    & \multicolumn{1}{r|}{0.096 (0.001)} & \multicolumn{1}{r|}{0.097 (0.001)}       & \multicolumn{1}{r|}{0.281 (0.0009)} & \multicolumn{1}{r|}{0.281 (0.002)} & \multicolumn{1}{r|}{0.369 (0.0005)} & \multicolumn{1}{r|}{0.370 (0.0007)}        \\ \hline
\textbf{40,000}  & \multicolumn{1}{r|}{0.096 (0.0004)} & \multicolumn{1}{r|}{0.096 (0.001)}       
 & \multicolumn{1}{r|}{0.280 (0.0005)} & \multicolumn{1}{r|}{0.280 (0.002)} & \multicolumn{1}{r|}{0.373 (0.0002)} & \multicolumn{1}{r|}{0.376 (0.0007)}        \\ \hline
\textbf{160,000}  & \multicolumn{1}{r|}{0.095 (0.0002)} & \multicolumn{1}{r|}{0.096 (0.001)}       & \multicolumn{1}{r|}{0.279 (0.0002)} & \multicolumn{1}{r|}{0.279 (0.002)} & \multicolumn{1}{r|}{0.375 (0.0001)} & \multicolumn{1}{r|}{0.375 (0.0007)}         \\ \hline
\textbf{Full GP}                    & \multicolumn{2}{c|}{0.097 (0.002)}   & \multicolumn{2}{c|}{0.282 (0.002)}  & \multicolumn{2}{c|}{0.367 (0.0009)}            \\ \hline
\end{tabular}}
\caption{Mean and standard deviation, in parenthesis, of statistic $h_2$ over 400 replications for different grid sizes.}
\label{tab2}
\end{table}

We also compare the NNGP to the PCGP in terms of parameter estimation. A dataset of size 200,000 is generated, and the maximum likelihood estimates (MLEs) of the parameters are computed under both models, for $r=1000$, $m=\ddot{m} = 15$ and number of squares for the PCGP so that the number of grid points per square is around 500. The data is generated using an approximation of the full GP, where the process is simulated sequentially at each location, conditioned on the 1,000 closest previous locations, with \(\mu = 0\), \(\sigma^2 = 1\), and \(\phi^{\nu} = 4\). 

Due to theoretical limitations in jointly estimating \( \sigma^2 \) and \( \phi \) \citep[see][]{Zhang04}, we fix \( \phi \) and estimate \( \sigma^2 \) jointly with \( \mu \). Similar results are observed when fixing \( \sigma^2 \) and estimating \( \phi \). Table~\ref{tab3} summarises the results for different sample sizes. While the estimates of \( \sigma^2 \) are consistent across both models, the NNGP model exhibits mild instability in the mean (\(\mu\)) estimates. 

Additionally, given the exponential decay of the correlation function and the large grid size, the maximum likelihood estimator under the generating model approximates the sample mean, which in this case is approximately \(-0.55\) across all sample sizes.

\begin{table}[!h]
\centering
\begin{tabular}{|r|r|r|r|}
\hline
       & \multicolumn{3}{c|}{$\boldsymbol{\mu}$ / $\boldsymbol{\sigma^2}$ / \textbf{log-lik}}                         \\ \hline
\textbf{n}      & \multicolumn{1}{|c|}{$\mathbf{2\times10^3}$}     & \multicolumn{1}{|c|}{$\mathbf{5\times10^3}$}     & \multicolumn{1}{|c|}{$\mathbf{10\times10^3}$}      \\ \hline
\textbf{NNGP}   & -0.38 / 1.00 / 3.3$\times10^3$ & -0.47 / 1.00 / 9.3$\times10^3$ & -0.71 / 0.98 / 1.9$\times10^4$      \\
\textbf{PCGP} & -0.39 / 0.99 / 2.8$\times10^3$ & -0.45 / 0.99 / 1.0$\times10^4$ & -0.50 / 0.99 / 2.4$\times10^4$    \\ \hline
\textbf{n}	   & \multicolumn{1}{|c|}{$\mathbf{25\times10^3}$}    & \multicolumn{1}{|c|}{$\mathbf{50\times10^3}$}  & \multicolumn{1}{|c|}{$\mathbf{75\times10^3}$}      \\ \hline
\textbf{NNGP}   & -0.97 / 0.98 / 4.9$\times10^4$ & -1.12 / 0.98 / 9.8$\times10^4$ & -1.12 / 0.99 / 1.4$\times10^5$ \\
\textbf{PCGP} & -0.46 / 0.99 / 7.5$\times10^4$ & -0.49 / 0.98 / 1.6$\times10^5$ & -0.41 / 0.98 / 2.6$\times10^5$ \\ \hline
\textbf{n}	   & \multicolumn{1}{|c|}{$\mathbf{100\times10^3}$}   & \multicolumn{1}{|c|}{$\mathbf{150\times10^3}$} & \multicolumn{1}{|c|}{$\mathbf{200\times10^3}$}   \\ \hline
\textbf{NNGP}   & -1.07 / 0.99 / 1.9$\times10^5$ & -1.06 / 0.98 / 2.9$\times10^5$ & -1.08 / 0.98 / 3.9$\times10^5$   \\
\textbf{PCGP} & -0.41 / 1.00 / 3.6$\times10^5$ & -0.46 / 1.00 / 5.8$\times10^5$ & -0.57 / 1.01 / 8.0$\times10^5$   \\ \hline
\end{tabular}
\caption{Comparison of the maximum likelihood estimates for different data sizes. The third statistic is the log-likelihood at the MLE.}
\label{tab3}
\end{table}

Finally, we present an empirical comparison between the PCGP and the mPCGP models. Using the same specifications as previously described, Figure~\ref{sim1b} displays a simulated path, given the same \(Z_{\mathcal{S}}\), for each of the two models, with \( G = 4 \) for the mPCGP. It can be observed that the mPCGP path is noticeably smoother along the boundaries of the squares.
\begin{figure}[!h]
\centering
\includegraphics[height=.25\textheight,width=1\linewidth]{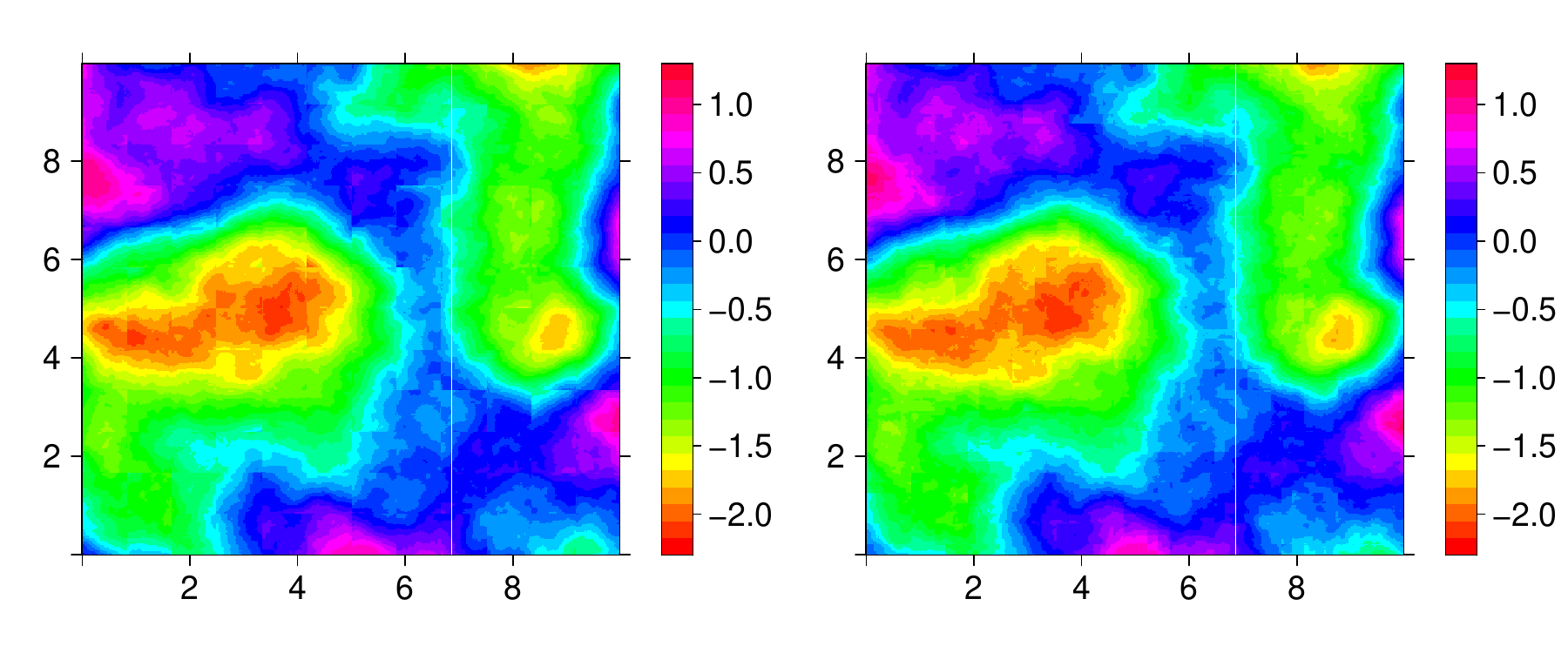}
\caption{Heat map of one realisation of the process at a regular grid of 160,000 locations using $k=12\times12$ squares. PCGP on the left and mPCGP on the right.}\label{sim1b}
\end{figure}

\section{Final remarks}
\label{s:conc}

This paper contributes two key advancements to the geostatistical literature. First, it identifies and addresses fundamental limitations within existing sparse GPs, like the NNGP and the RNGP. Specifically issues that restrict their practical application due to inherent path roughness and non-measurability of certain statistically appealing functions. Second, it introduces a novel hybrid approach that retains the computational efficiency of existing sparse GPs while preserving the smoothness and probabilistic soundness of full GPs. These contributions enable practitioners to analyse large spatial datasets more effectively, benefiting from both scalability and robust statistical properties. In particular, the PCGP can serve as a baseline Gaussian process for complex geostatistical models, including spatial fusion, quantile regression, and point process models, to perform mathematically rigorous and discretisation-error free (exact) inference.

We provide a comprehensive examination of the theoretical challenges associated with the existing sparse GPs, including path irregularities and the difficulties posed by non-measurable functions, which can compromise statistical inference and limit the utility of those processes in geostatistical applications. Our numerical experiments highlight the practical impact of these issues and demonstrate how they are effectively mitigated by the proposed PCGP.

The innovations introduced in this work lay the foundation for more reliable and scalable Gaussian processes, opening new possibilities for applications in geostatistics and beyond. This work can contribute to the future development of statistical methodologies, particularly within an exact inference framework. Furthermore, beyond spatial statistics, the machine learning community has extensively used GP's, and the challenges for well-defined, scalable spatial processes are
also relevant to this research area. Although our focus lies in the spatial statistics domain, the introduction of the PCGP, the fundamental limitations of current state-of-the-art well-defined scalable methods, and takeaways about their use can be applied and explored in a broader context within this community.

\section*{Acknowledgments}
Fl\'{a}vio Gon\c{c}alves is funded by FAPEMIG - grant APQ-01837-22, and CNPq - grant 308536/2023-1. Marcos O. Prates acknowledges the research grants obtained from CNPq-Brazil (309186/2021-8), FAPEMIG (APQ-01837-22, APQ-01748-24), and CAPES, respectively, for partial financial support.
Gareth Roberts was supported by the EPSRC grants: PINCODE (EP/X028119/1) and EP/V009478/1; and by the UKRI grant, OCEAN, EP/Y014650/1.

\bibliographystyle{apalike}
\bibliography{biblio}
\appendix
\section*{Appendix A - The probability theory behind the NNGP}\label{ProbSec}

To fully understand the issues concerning the construction of NNGPs, we shall need to delve into its precise mathematical construction, which is natural to consider through the lenses of Kolmogorov's Existence Theorem.

We follow \citet[][Sections 36 and 38]{Billin}, starting with some essential definitions. 

Let \( \mathcal{D} \subset \mathbb{R}^d \) denote a region for some \( d \in N \). A stochastic process \( X = \{X_u; u \in \mathcal{D}\} \) is a collection of random variables indexed by points in \( \mathcal{D} \). For simplicity, we assume \( X_u \) to be univariate. For any finite collection of distinct points \( (u_1, \ldots, u_k) \subset S \), the joint distribution of the random vector \( (X_{u_1}, \ldots, X_{u_k}) \) (the finite-dimensional distributions of \( X \))
is denoted by $ \mu_{u_1, \ldots, u_k}(A) = P[(X_{u_1}, \ldots, X_{u_k}) \in A], \; A \in \mathcal{B}^k.$

The process \( X \) is formally defined on the sample space \( R^\mathcal{D} \), the space of all real-valued functions on \( \mathcal{D} \). The corresponding \( \sigma \)-algebra, \( \mathcal{R}^S \), is generated by the coordinate maps \( X_u: \mathbb{R}^\mathcal{D} \to \mathbb{R} \), given by \( X_u(x) = x(u) \) for all \( u \in \mathcal{D} \). Thus, \( \mathcal{R}^\mathcal{D} \) is the smallest \( \sigma \)-field containing all sets of the form: $
[x \in \mathbb{R}^\mathcal{D} : X_u(x) \in A], \; u \in \mathcal{D}, \; A \in \mathcal{B}.
$

The collection \( \{X_u; u \in \mathcal{D}\} \) becomes a well-defined stochastic process on the measurable space \( (\mathbb{R}^\mathcal{D}, \mathcal{R}^\mathcal{D}) \) if a probability measure \( P \) is defined on this space. Kolmogorov's Existence Theorem ensures the existence of such a measure, provided the following consistency conditions hold for the finite-dimensional distributions. Let \( A = A_1 \times \ldots \times A_k \) for \( A_i \in \mathcal{B} \), and let \( \pi \) be a permutation of \( (1, \ldots, k) \). The conditions are:
\begin{align}
\mu_{u_1, \ldots, u_k}(A_1 \times \ldots \times A_k) &= \mu_{u_{\pi 1}, \ldots, u_{\pi k}}(A_{\pi 1} \times \ldots \times A_{\pi k}), \label{CC1} \\
\mu_{u_1, \ldots, u_{k-1}}(A_1 \times \ldots \times A_{k-1}) &= \mu_{u_1, \ldots, u_k}(A_1 \times \ldots \times A_{k-1}, \mathbb{R}). \label{CC2}
\end{align}

\begin{theorem} 
\label{KET}
\textbf{Kolmogorov's Existence Theorem} \citep{Billin}.  
If $\mu_{u_1,\ldots,u_k}$ are a system of distributions satisfying the consistence conditions \eqref{CC1} and \eqref{CC2}, then there is a probability measure $P$ on $\mathcal{R}^\mathcal{D}$ such that the process $\{X_u;u\in \mathcal{D}\}$ on $(\mathbb{R}^\mathcal{D},\mathcal{R}^\mathcal{D},P)$ has the $\mu_{u_1,\ldots,u_k}$ as its finite-dimensional distributions.
\end{theorem}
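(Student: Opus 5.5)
The plan is the classical route through Carathéodory's extension theorem. First, I would extend the given system to arbitrary Borel sets: for distinct $u_1,\ldots,u_k\in\mathcal D$ and $H\in\mathcal B^k$, the measures $\mu_{u_1,\ldots,u_k}$ are already defined on all of $\mathcal B^k$. Conditions \eqref{CC1} and \eqref{CC2} are stated on rectangles, but they agree on a $\pi$-system generating $\mathcal B^k$. So they extend to all $H\in\mathcal B^k$: invariance under coordinate permutations, and $\mu_{u_1,\ldots,u_{k-1}}(H)=\mu_{u_1,\ldots,u_k}(H\times\mathbb R)$.

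Next, I would introduce the field $\mathcal R_0^{\mathcal D}$ of finite-dimensional cylinders $[x\in\mathbb R^{\mathcal D}:(x(u_1),\ldots,x(u_k))\in H]$, with $H\in\mathcal B^k$. On this field I would define $P_0$ of such a cylinder as $\mu_{u_1,\ldots,u_k}(H)$.

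The first real step is to show that $P_0$ is well defined. A given cylinder admits many representations. Any two representations can be rewritten over a common finite index set $\{u_1,\ldots,u_n\}$, containing both index sets, by permuting coordinates and adjoining factors of $\mathbb R$. The extended consistency conditions then show that both representations give the same value. The same device, rewriting two disjoint cylinders over a common index set, shows that $P_0$ is finitely additive on $\mathcal R_0^{\mathcal D}$. It also shows that $P_0(\mathbb R^{\mathcal D})=1$. Since $\mathcal R_0^{\mathcal D}$ generates $\mathcal R^{\mathcal D}$, Carathéodory's theorem will give the required $P$, provided $P_0$ is countably additive on the field.

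By finite additivity, countable additivity is equivalent to continuity at $\varnothing$. That is, if cylinders $C_n\downarrow\varnothing$, then $P_0(C_n)\to0$. This is the main obstacle, and I would argue by contradiction using inner regularity.

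Suppose $C_n\downarrow$ and $P_0(C_n)\ge\epsilon>0$ for all $n$. By repeating sets and enlarging index sets, I may assume that $C_n=[(x(u_1),\ldots,x(u_n))\in H_n]$, with a single increasing sequence of points $u_1,u_2,\ldots$.

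Every finite Borel measure on $\mathbb R^n$ is inner regular. So I can choose compact $K_n\subset H_n$ with $\mu_{u_1,\ldots,u_n}(H_n\setminus K_n)<\epsilon/2^{n+1}$. Let $D_n$ be the cylinder based on $K_n$, and set $E_n=D_1\cap\cdots\cap D_n\subset C_n$. Then $P_0(C_n\setminus E_n)\le\sum_{j\le n}\epsilon/2^{j+1}<\epsilon/2$, so $P_0(E_n)>\epsilon/2$ and in particular $E_n\neq\varnothing$.

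Now pick $x^{(n)}\in E_n$. For each fixed $j$ and all $n\ge j$, the vector $(x^{(n)}(u_1),\ldots,x^{(n)}(u_j))$ lies in the compact set $K_j$. A Cantor diagonal argument then extracts a subsequence along which $x^{(n)}(u_i)\to y_i$ for every $i$. Closedness of each $K_j$ gives $(y_1,\ldots,y_j)\in K_j\subset H_j$ for all $j$. Any $x\in\mathbb R^{\mathcal D}$ with $x(u_i)=y_i$ therefore lies in every $C_n$. This contradicts $\bigcap_n C_n=\varnothing$.

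Hence $P_0$ is countably additive, and Carathéodory's theorem gives a unique probability measure $P$ on $\mathcal R^{\mathcal D}=\sigma(\mathcal R_0^{\mathcal D})$ extending $P_0$. By construction, the coordinate process $X_u(x)=x(u)$ on $(\mathbb R^{\mathcal D},\mathcal R^{\mathcal D},P)$ satisfies $P[(X_{u_1},\ldots,X_{u_k})\in H]=\mu_{u_1,\ldots,u_k}(H)$. So the $\mu_{u_1,\ldots,u_k}$ are its finite-dimensional distributions, which completes the proof.
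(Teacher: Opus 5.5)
Your proposal is correct and is essentially the standard proof of Theorem~36.1 in Billingsley (Section 36), which the paper cites rather than reproves. That proof follows the same steps as yours: consistency is extended from rectangles to all of $\mathcal B^k$ by a $\pi$-$\lambda$ argument, $P_0$ is defined and shown finitely additive on the field of cylinders, continuity at $\varnothing$ is obtained from inner regularity by compact sets plus a diagonal extraction, and Carath\'eodory's theorem finishes; the only detail you leave implicit is that two cylinders written over a common set of distinct points have equal (or disjoint) bases because the projection $\mathbb R^{\mathcal D}\to\mathbb R^n$ is surjective, which is routine.
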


The critical observation from Kolmogorov's theorem is that the process is guaranteed to exist only on the measurable space \( (R^\mathcal{D}, \mathcal{R}^\mathcal{D}) \). However, several important sets are not included in this \( \sigma \)-field. For example, the set of continuous functions on \( \mathcal{D} \) lies outside \( \mathcal{R}^\mathcal{D} \) \citep[see][Section 36]{Billin}. In general, a set \( A \in \mathcal{R}^\mathcal{D} \) must satisfy the following property: there exists a countable subset \( T \subset \mathcal{D} \) such that if \( x \in A \) and \( x(u) = y(u) \) for all \( u \in T \), then \( y \in A \). This implies that only functions determined by a countable collection of locations are measurable, which excludes many relevant functions, such as:
$$
\int_{\mathcal{D}} g(X_u) \, du \quad \text{and} \quad C(X) = \{u \in \mathcal{D} : X_u \in B\}, \; B \in \mathcal{B}.
$$

As \citet{Billin} remarked: ``An application of Kolmogorov's Existence Theorem will always yield a stochastic process with the prescribed finite-dimensional distributions, but the process may lack certain path-function properties that are reasonable to require of it as a model for some natural phenomenon."

We consider processes defined by independent random variables, reflecting the conditional independence that characterizes the infinite-dimensional remainder of the NNGP given the finite reference set. Specifically, consider \( X = \{X_u; u \in \mathcal{D}\} \) where each \( X_u \) is a continuous random variable with common support \( C \subset R \). While Kolmogorov's theorem ensures the existence of such a process, its paths are almost surely highly irregular. As noted by \citet[][Example 38.5]{Billin}, the process is almost surely everywhere dense on \( C \), implying for that: $
\inf_{u \in \mathcal{D}} X_u \; \text{and} \; \sup_{u \in \mathcal{D}} X_u
$ 
are simply the lower and upper limits of \( C \). Additionally, functions such as \( C(X) = \{u \in \mathcal{D} : X_u \in B\} \) are not only non-measurable but also exhibit highly irregular discontinuous behavior, further limiting their usefulness.

\section*{Appendix B - Proofs}

\begin{proof}[Proof of Theorem \ref{RPT}]
Fix $z \in \mathbb R^r$, and let $P_z^*(\cdot)=P(\cdot\mid Z_{\mathcal S}=z)$. Under $P_z^*$, the variables $\{Z_u:u\in \mathcal D\setminus \mathcal S\}$ are independent and Gaussian, with conditional mean $\mu_z(u)$ and conditional variance $\sigma^2(u)$.
Let $\mathfrak D$ denote the countable collection of open hypercubes contained in $\mathcal D$ with rational endpoints, and let $\mathfrak A$ denote the countable collection of open intervals in $\mathbb R$ with rational endpoints. Fix $(\mathcal D^*,A)\in \mathfrak D\times \mathfrak A$. Since $\mathcal S$ is finite and $\mathcal D^*$ is open, one can choose $\varepsilon_{\mathcal D^*}>0$ such that \;
\(\
K_{\mathcal D^*}:=\mathcal D^* \setminus \bigcup_{s\in\mathcal S}\overline B(s,\varepsilon_{\mathcal D^*})
\) \;
is a nonempty open subset of $\mathcal D^*$. In particular, $K_{\mathcal D^*}$ lies at positive distance from $\mathcal S$.

By continuity of the conditional variance away from $\mathcal S$, there exist constants $0<\underline\sigma_{\mathcal D^*}\le \overline\sigma_{\mathcal D^*}<\infty$ such that $\underline\sigma_{\mathcal D^*}^2\le \sigma^2(u)\le \overline\sigma_{\mathcal D^*}^2$ for all $u\in K_{\mathcal D^*}$. Moreover, for fixed $z$, continuity of $u\mapsto \mu_z(u)$ implies that $|\mu_z(u)|\le M_{z,\mathcal D^*}$ for all $u\in K_{\mathcal D^*}$, for some finite constant $M_{z,\mathcal D^*}$. Hence
\(
p_{z,\mathcal D^*,A}:=\inf_{u\in K_{\mathcal D^*}} P_z^*(Z_u\in A)>0
\).

Now $D_0\cap K_{\mathcal D^*}$ is infinite, since $D_0$ is dense in $\mathcal D\setminus \mathcal S$ and $K_{\mathcal D^*}$ is a nonempty open subset of $\mathcal D\setminus \mathcal S$. Enumerate it as $\{u_n\}_{n\ge1}$, and define $E_n=\{Z_{u_n}\in A\}$. The events $E_n$ are independent under $P_z^*$ and satisfy $P_z^*(E_n)\ge p_{z,\mathcal D^*,A}$ for all $n$. Therefore $\sum_{n=1}^\infty P_z^*(E_n)=\infty$, and the second Borel-Cantelli lemma yields \;
\(
P_z^*(E_n \text{ i.o.})=1.
\) \;
Thus, for every $(\mathcal D^*,A)\in \mathfrak D\times \mathfrak A$,
\mbox{\(
P_z^*\bigl(\exists\ \text{infinitely many }u\in D_0\cap \mathcal D^* \text{ such that } Z_u\in A\bigr)=1\).}

Since $\mathfrak D\times \mathfrak A$ is countable, the intersection of these events over all $(\mathcal D^*,A)\in \mathfrak D\times \mathfrak A$ still has $P_z^*$-probability one. On that event, for every $\mathcal D^*\in\mathfrak D$, the set $\{Z_u:u\in D_0\cap \mathcal D^*\}$ intersects every open interval with rational endpoints, and is therefore dense in $\mathbb R$. Hence, $
P_z^*\bigl(\{Z_u:u\in D_0\cap \mathcal D^*\}\text{ is dense in }\mathbb R \text{ for every } \mathcal D^*\in\mathfrak D\bigr)=1.
$ Finally, let \(I\) denote the indicator of this. Then
\(P(I=1)=E\bigl[P(I=1\mid Z_{\mathcal S})\bigr]\allowbreak=E[1]=1\).
\end{proof}

\begin{proof}[Proof of Theorem \ref{mMT}]
Let $D_\infty=\bigcup_{n\geq 1}D_n$. Since the grids are regular and their precision is successively doubled, $D_\infty$ is countable and dense in $\mathcal D$. Hence, for any open hypercube $\mathcal D^* \subset \mathcal D$ with rational endpoints, $D_\infty\cap \mathcal D^*\cap (\mathcal D\setminus\mathcal S)$ is countable and dense in $\mathcal D^*$. By Theorem~\ref{RPT}, $\{Z_u: u\in D_\infty\cap \mathcal D^*\cap (\mathcal D\setminus\mathcal S)\}$ 
is $P$-a.s.\ dense in $\mathbb{R}$.
Thus, for any $x,y\in\mathbb{R}$, there exist infinitely many $u\in D_\infty$ such that $Z_u<x$ and infinitely many $u\in D_\infty$ such that $Z_u>y$, proving (i).

Now fix $M>0$. By (i), $P$-a.s.\ there exist $u,v\in D_\infty$ such that $Z_u<-M$ and $Z_v>M$. Since $D_n\uparrow D_\infty$, both $u$ and $v$ belong to $D_n$ for all sufficiently large $n$, so $
\min_{w\in D_n} Z_w<-M,
\;
\max_{w\in D_n} Z_w>M$ 
for all large enough $n$. As $M>0$ is arbitrary, (ii) follows.
\end{proof}

\begin{proof}[Proof of Theorem~\ref{CPS}]
Let $W_i\sim N(0,1)$ be i.i.d. and write $Z_{u_i}=\sum_{j=1}^r a_{ij} Z_{\mathfrak{s}_j} + \sigma_i W_i$.

\textbf{Part (i).}
We have
\[
\sum_{i=1}^{r_n} \frac{1}{r_n} Z_{u_i}
=
\sum_{j=1}^{r}
\left(
\sum_{i=1}^{r_n} \frac{a_{ij}}{r_n}
\right) Z_{\mathfrak{s}_j}
+
\sum_{i=1}^{r_n} \frac{\sigma_i W_i}{r_n}.
\]
Since $\sigma_i \le \sigma$ for some $\sigma\in\mathbb R^+$, the second term converges to $0$ $P$-a.s.\ by \citet[Theorem~2.9.2]{revesz}.

For each $j=1,\ldots,r$, the coefficient $a_{ij}$ may be written as $a_j(u_i)$, where $a_j:\mathcal D\to\mathbb R$ is induced by the conditional Gaussian mean under the $m$-nearest-neighbor construction. Since $\mathcal S$ is finite, there are only finitely many possible neighbor sets, inducing a finite partition of $\mathcal D$. On each region of this partition, the neighbor set is fixed, and $a_j$ is continuous by continuity of the covariance function. The boundaries between regions are contained in finitely many distance-tie hypersurfaces and hence have Lebesgue measure zero. Therefore $a_j$ is Riemann integrable on $\mathcal D$, and the regularity of the grids implies that $
\lim_{n\to\infty}\sum_{i=1}^{r_n}\frac{a_{ij}}{r_n}
$ exists for each $j$. The result follows.

\medskip
\textbf{Part (ii).}
Let $\mu_i=\sum_{j=1}^r a_{ij} Z_{\mathfrak{s}_j}$. Then
\[
\sum_{i=1}^{r_n}\frac{1}{r_n}g(Z_{u_i})
=
\frac{\sum_{i=1}^{r_n}\bigl(g(\mu_i+\sigma_iW_i)-\mu_{g,i}\bigr)}{r_n}
+
\frac{\sum_{i=1}^{r_n}\mu_{g,i}}{r_n}.
\]
Since $\mu_i$ is bounded and $g(\mu+\sigma W)$ has finite variance for all $(\mu,\sigma)\in\mathbb R\times\mathbb R^+$, the first term converges to $0$ $P$-a.s.\ by \citet[Theorem~2.9.2]{revesz}.

Moreover, $\mu_i=\mu(u_i)$ and $\sigma_i^2=\sigma^2(u_i)$ for piecewise continuous functions $\mu,\sigma^2$ on the same finite partition of $\mathcal D$. Hence, $ \mu_{g,i}=h(u_i),
\;
h(u):=\mathbb E_W[g(\mu(u)+\sigma(u)W)],$ 
where $h$ is piecewise continuous, with discontinuities confined to a set of Lebesgue measure zero. Therefore $h$ is Riemann integrable on $\mathcal D$, and the regularity of the grids implies that $\lim_{n\to\infty}\sum_{i=1}^{r_n}\frac{\mu_{g,i}}{r_n}$
exists. This proves the result.
\end{proof}

\begin{proof}[Proof of Theorem \ref{Exst_PCGP}]

The PCGP is specified through (i) a marginal distribution for $Z_{\mathcal S}$ and (ii) a conditional distribution for the regional restrictions $\{Z_{\mathcal{D}_k}\}_{k=1}^K$ given $Z_{\mathcal S}$.
We show that these specifications determine a unique probability measure via a standard kernel construction.

Let $(\mathbb R^r,\mathcal B^r,P_1)$ denote the probability space of $Z_{\mathcal S}$ defined by \eqref{eq:pcgp-reference}.
For each $k\in\{1,\ldots,K\}$, let $C_k := C(\mathcal{D}_k)$ be the Banach space of real-valued continuous functions on $\mathcal{D}_k$, endowed with the supremum norm and its associated Borel $\sigma$-field $\mathcal C_k$.
Let $P$ denote the probability measure of the parent Gaussian process on $C(\mathcal{D})$, induced by a positive definite covariance function and continuous sample paths.

For each $z\in\mathbb R^r$, define $P_{2,k,z}$ as the conditional law of the restriction $Z_{\mathcal{D}_k}$ under the parent Gaussian process given $Z_{\mathcal S}=z$ (or, more generally, given the subset of reference locations used to condition region $\mathcal{D}_k$).
Since the parent Gaussian process induces a Borel probability measure on the Banach space $C(\mathcal{D})$, conditioning on finitely many evaluation functionals yields regular conditional distributions.
Consequently, for each $k$, the mapping $z \mapsto P_{2,k,z}$
defines a probability kernel from $(\mathbb R^r,\mathcal B^r)$ to $(C_k,\mathcal C_k)$.
Equivalently, for every $A_k\in\mathcal C_k$, the function $z\mapsto P_{2,k,z}(A_k)$ is $\mathcal B^r$-measurable.

\medskip
\noindent\emph{Step 1: product kernel.}
For each $z\in\mathbb R^r$, define the product measure
\[
P_{2,z} := \bigotimes_{k=1}^K P_{2,k,z}
\;\text{on}\;
\Big(\bigotimes_{k=1}^K C_k,\ \bigotimes_{k=1}^K \mathcal C_k\Big).
\]
For a measurable rectangle $A=\prod_{k=1}^K A_k$ with $A_k\in\mathcal C_k$, we have
$P_{2,z}(A) = \prod_{k=1}^K P_{2,k,z}(A_k),
$ which is $\mathcal B^r$-measurable in $z$ by the kernel property and closure of measurable functions under products.
Let
\(
\mathcal F := \Big\{A\in\bigotimes_{k=1}^K\mathcal C_k:\ z\mapsto P_{2,z}(A)\ \text{is $\mathcal B^r$-measurable}\Big\}
\).
Then $\mathcal F$ is a Dynkin system (equivalently, a monotone class) containing the measurable rectangles, and hence it contains the $\sigma$-algebra they generate, i.e.\ $\mathcal F=\bigotimes_{k=1}^K\mathcal C_k$.
Therefore, $z\mapsto P_{2,z}(A)$ is $\mathcal B^r$-measurable for all $A\in\bigotimes_{k=1}^K\mathcal C_k$.

\medskip
\noindent\emph{Step 2a: joint measure on $\mathbb R^r\times \otimes_k C_k$.}
Define $\tilde P$ on the product space
\[
\Big(\mathbb R^r\times \bigotimes_{k=1}^K C_k,\ \mathcal B^r\otimes \bigotimes_{k=1}^K\mathcal C_k\Big)
\mbox{ by }
\tilde P(B\times A)
:= \int_B P_{2,z}(A)\,P_1(dz),
\;
B\in\mathcal B^r,\ A\in\bigotimes_{k=1}^K\mathcal C_k.
\]
By the measurability established above, $\tilde P$ is well defined and extends uniquely to a probability measure on the full product $\sigma$-algebra (e.g., by Lemma~1 of \citet{POGAMP}).

\medskip
\noindent\emph{Step 2b: marginal measure on $\otimes_k C_k$ only.}
Alternatively, define the PCGP directly as a probability measure on
$\big(\bigotimes_{k=1}^K C_k,\ \bigotimes_{k=1}^K\mathcal C_k\big)$ via the mixture
\(
\tilde P(A) := \int_{\mathbb R^r} P_{2,z}(A)\,P_1(dz)\),
 \(A\in\bigotimes_{k=1}^K\mathcal C_k
\).
This yields the law of the random element $(Z_{\mathcal{D}_1},\ldots,Z_{\mathcal{D}_K})$ and makes the explicit $\mathbb R^r$ component redundant.
\end{proof}

\newpage

\setcounter{subsection}{0}
\setcounter{section}{0}
\renewcommand{\thesection}{S.\arabic{section}}
\renewcommand{\thesubsection}{S.\arabic{section}.\arabic{subsection}}
\renewcommand\theequation{S.\arabic{equation}}
\renewcommand\thetable{S.\arabic{table}}
\renewcommand\thefigure{S.\arabic{figure}}

\begin{center}
\LARGE{\textbf{Supplementary Material}}
\end{center}

\section{More on the mPCGP}
\label{sec:mPCGP}

Figure~\ref{sim5} illustrates examples of partitions with $G = 2$ and $G = 4$, using squares in $R^2$.

\captionsetup[figure]{skip=0pt}
\begin{figure}[!h]
\centering
\includegraphics[width=0.7\linewidth]{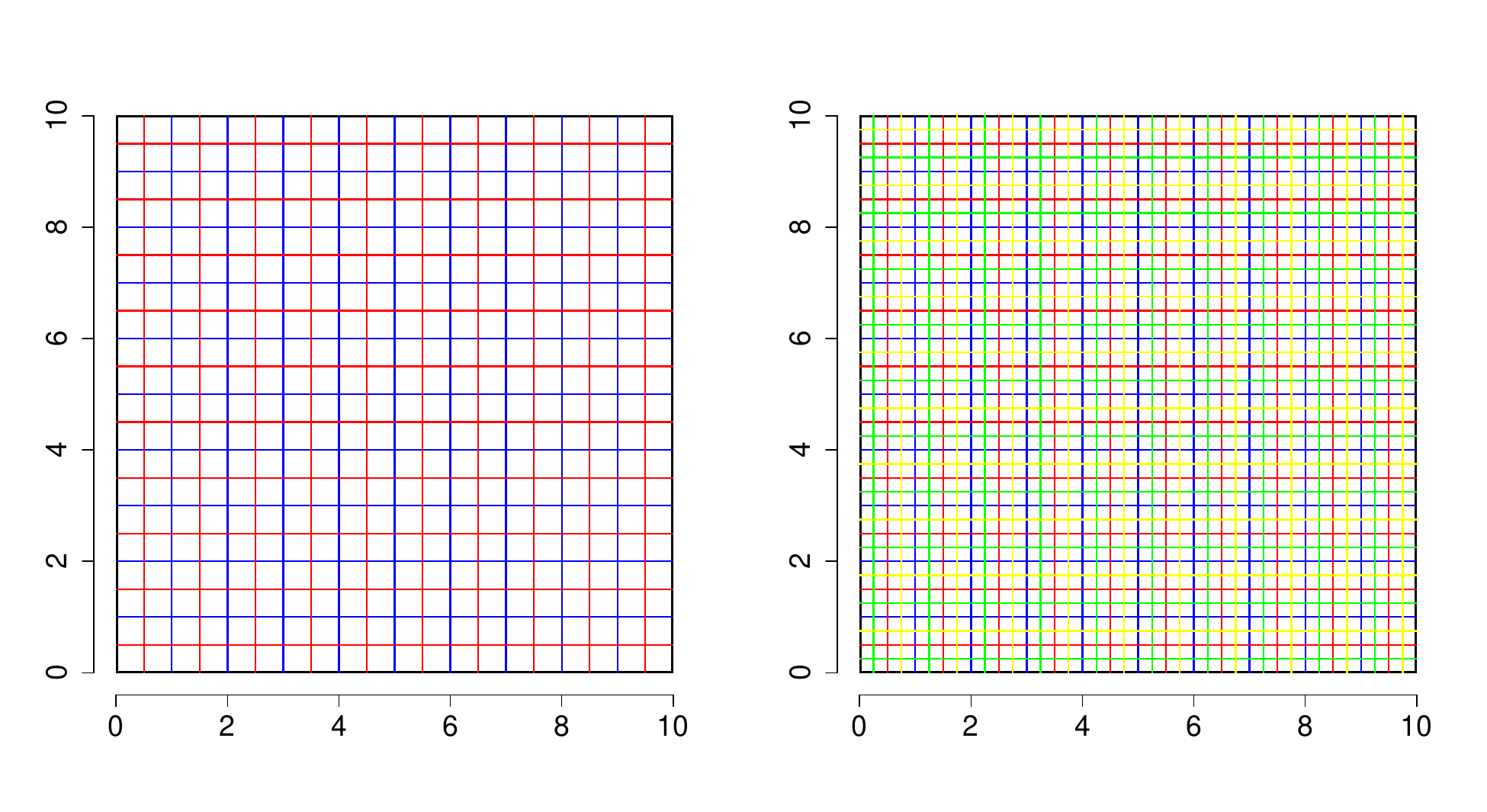}
\caption{Examples of grids for the mPCGP with $G=2$ (left) and $G=4$ (right). Each colour represents one partition.}\label{sim5}
\end{figure}

\section{Further numerical results}
\label{sec:num}

This section presents futher results of the numerical experiment. Figure~\ref{sim2} illustrates this behavior for $r=200$ and $m = \ddot{m} = 5$. Figures~\ref{sim3} and \ref{sim4} show variation of $r$ and $m$ for the PCGP. Finally, Tables~\ref{tab1b} and \ref{tab2b} summarise key statistics for the $h$ functions across $400$ replications for all five grids, considering the NNGP and PCGP with $r=50$ and $m = \ddot{m} = 5$.

\captionsetup[figure]{skip=0pt}
\begin{figure}[!h]
\centering
\includegraphics[width=1\linewidth]{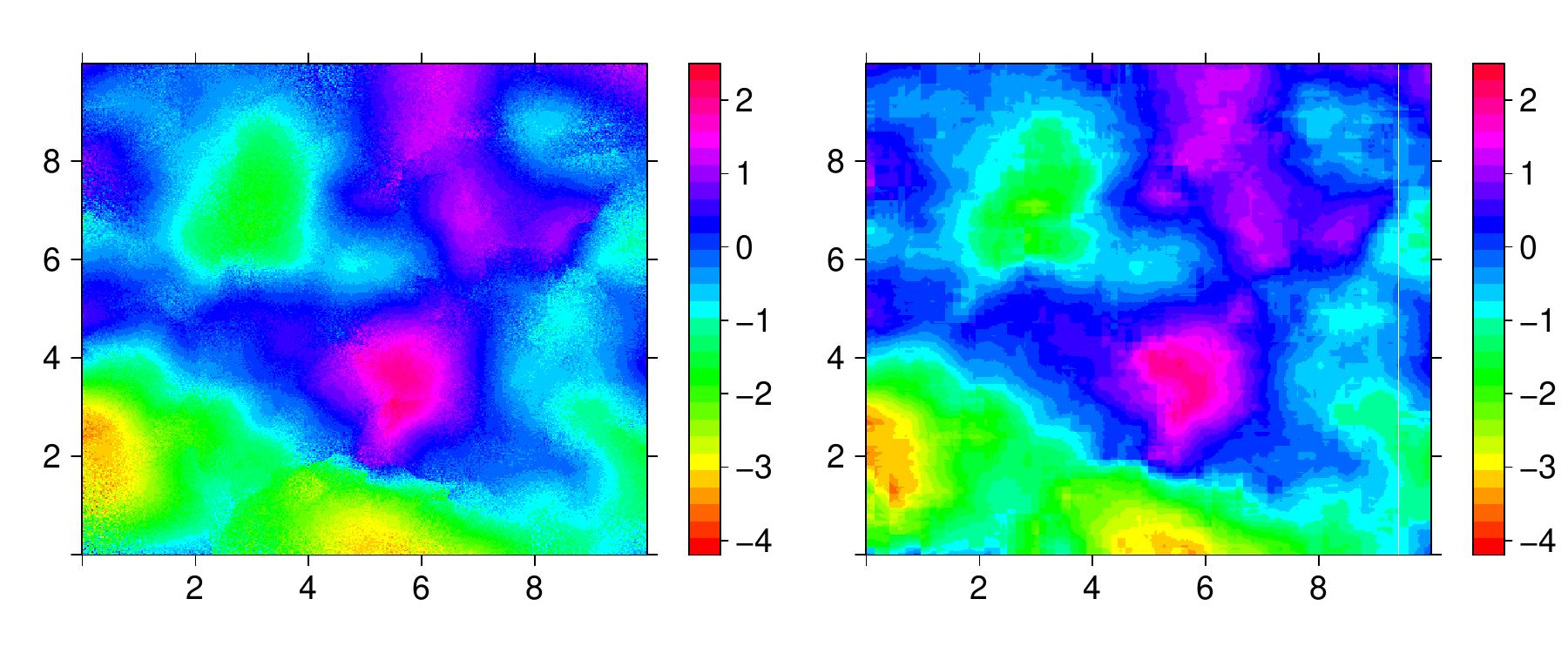}
\caption{Heat map of one realisation of the process at a regular grid of 160,000 locations for values $r=200$ and $m=5$. NNGP on the left and PCGP on the right.}\label{sim2}
\end{figure}

\captionsetup[figure]{skip=0pt}
\begin{figure}[!h]
\centering
\includegraphics[width=1\linewidth]{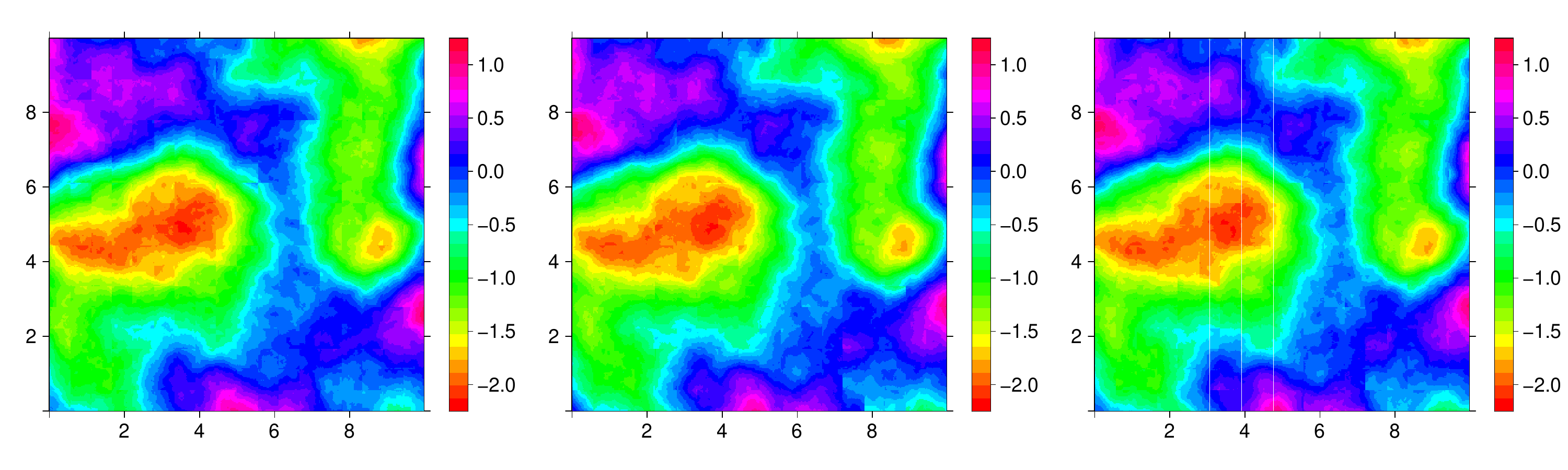}
\caption{Heat map of one realisation of the process at a regular grid of 160,000 locations for values $r=1000$, $2000$ and $3000$, with $m=15$, for the PCGP.}\label{sim3}
\end{figure}

\captionsetup[figure]{skip=0pt}
\begin{figure}[!h]
\centering
\includegraphics[width=1\linewidth]{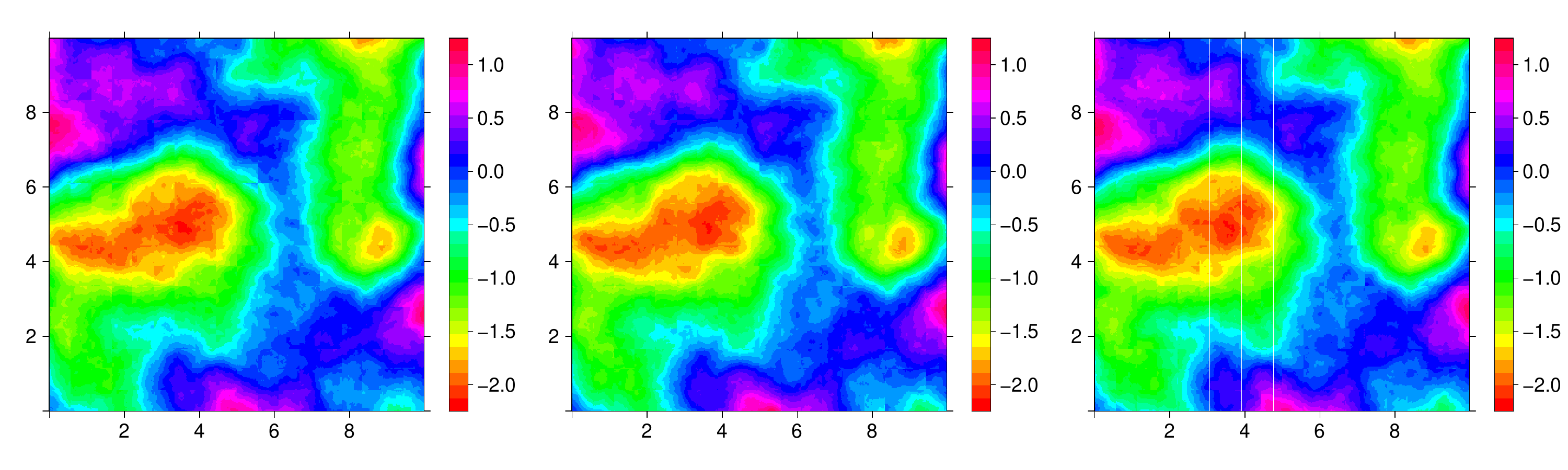}
\caption{Heat map of one realisation of the process at a regular grid of 160,000 locations for values $m=15$, $30$ and $60$, with $r=1000$, for the PCGP.}\label{sim4}
\end{figure}

\begin{table}[!h]
\resizebox{\columnwidth}{!}{%
\begin{tabular}{|r|rr|rr|}
\hline
\multicolumn{1}{|c|}{\textbf{M}} & \multicolumn{2}{c|}{\textbf{$h_1$}}    & \multicolumn{2}{c|}{\textbf{$h_3$}}   \\ \hline
\multicolumn{1}{|c|}{\textbf{}}  & \multicolumn{1}{c|}{\textbf{NNGP}}            & \multicolumn{1}{c|}{\textbf{PCGP}} & \multicolumn{1}{c|}{\textbf{NNGP}}            & \multicolumn{1}{c|}{\textbf{PCGP}}  \\ \hline
\textbf{625}            & \multicolumn{1}{r|}{-14.96 (1.431)} & \multicolumn{1}{r|}{-14.82 (2.510)} & \multicolumn{1}{r|}{\{-3.260,2.558\} (0.181,0.333)} & \multicolumn{1}{r|}{\{-3.230,2.410\} (0.183,0.375)}      \\ \hline
\textbf{2,500}          & \multicolumn{1}{r|}{-14.04 (0.767)} & \multicolumn{1}{r|}{-13.96 (2.308)} & \multicolumn{1}{r|}{\{-3.434,2.884\} (0.179,0.336)} & \multicolumn{1}{r|}{\{-3.276,2.477\} (0.182,0.367)}      \\ \hline
\textbf{10,000}         & \multicolumn{1}{r|}{-13.53 (0.349)} & \multicolumn{1}{r|}{-13.55 (2.198)} & \multicolumn{1}{r|}{\{-3.642,3.210\} (0.196,0.303)} & \multicolumn{1}{r|}{\{-3.313,2.542\} (0.185,0.378)}      \\ \hline
\textbf{40,000}         & \multicolumn{1}{r|}{-13.31 (0.178)} & \multicolumn{1}{r|}{-13.33 (2.210)} & \multicolumn{1}{r|}{\{-3.838,3.520\} (0.189,0.271)} & \multicolumn{1}{r|}{\{-3.317,2.589\} (0.178,0.367)}      \\ \hline
\textbf{160,000}        & \multicolumn{1}{r|}{-13.20 (0.086)} & \multicolumn{1}{r|}{-13.26 (2.167)} & \multicolumn{1}{r|}{\{-4.043,3.811\} (0.188,0.283)} & \multicolumn{1}{r|}{\{-3.331,2.562\} (0.181,0.371)}       \\ \hline
\textbf{Full GP}        & \multicolumn{2}{c|}{-10.17 (4.961)} & \multicolumn{2}{c|}{\{-3.192,2.281\} (0.147,0.413)}       \\ \hline
\end{tabular}}
\caption{Mean and standard deviation, in parenthesis, of statistics $h_1$ and $h_3$ over 400 replications for different grid sizes, $r=50$ and $m=5$.}
\label{tab1b}
\end{table}

\begin{table}[!h]
\resizebox{\columnwidth}{!}{%
\begin{tabular}{|r|rr|rr|rr|}
\hline
\multicolumn{1}{|c|}{\textbf{M}} & \multicolumn{2}{c|}{\textbf{$h_2$, $A_1$}}                                           & \multicolumn{2}{c|}{\textbf{$h_2$, $A_2$}}                                           & \multicolumn{2}{c|}{\textbf{$h_2$, $A_3$}}                                           \\ \hline
\multicolumn{1}{|c|}{\textbf{}}  & \multicolumn{1}{c|}{\textbf{NNGP}}            & \multicolumn{1}{c|}{\textbf{PCGP}} & \multicolumn{1}{c|}{\textbf{NNGP}}            & \multicolumn{1}{c|}{\textbf{PCGP}} & \multicolumn{1}{c|}{\textbf{NNGP}}            & \multicolumn{1}{c|}{\textbf{PCGP}} \\ \hline
\textbf{625}         & \multicolumn{1}{r|}{0.222 (0.013)} & \multicolumn{1}{r|}{0.220 (0.016)}     & \multicolumn{1}{r|}{0.091 (0.008)} & \multicolumn{1}{r|}{0.091 (0.012)} & \multicolumn{1}{r|}{0.518 (0.012)} & \multicolumn{1}{r|}{0.517 (0.017)}     \\ \hline
\textbf{2,500}       & \multicolumn{1}{r|}{0.222 (0.006)} & \multicolumn{1}{r|}{0.220 (0.093)}     & \multicolumn{1}{r|}{0.092 (0.004)} & \multicolumn{1}{r|}{0.093 (0.011)} & \multicolumn{1}{r|}{0.517 (0.006)} & \multicolumn{1}{r|}{0.517 (0.015)}     \\ \hline
\textbf{10,000}      & \multicolumn{1}{r|}{0.222 (0.003)} & \multicolumn{1}{r|}{0.219 (0.014)}     & \multicolumn{1}{r|}{0.094 (0.002)} & \multicolumn{1}{r|}{0.094 (0.010)} & \multicolumn{1}{r|}{0.517 (0.003)} & \multicolumn{1}{r|}{0.517 (0.015)}     \\ \hline
\textbf{40,000}      & \multicolumn{1}{r|}{0.222 (0.001)} & \multicolumn{1}{r|}{0.221 (0.013)}     & \multicolumn{1}{r|}{0.094 (0.001)} & \multicolumn{1}{r|}{0.095 (0.010)} & \multicolumn{1}{r|}{0.517 (0.001)} & \multicolumn{1}{r|}{0.517 (0.014)}    \\ \hline
\textbf{160,000}     & \multicolumn{1}{r|}{0.222 (0.0008)} & \multicolumn{1}{r|}{0.218 (0.012)}    & \multicolumn{1}{r|}{0.094 (0.0005)} & \multicolumn{1}{r|}{0.095 (0.010)} & \multicolumn{1}{r|}{0.517 (0.0008)} & \multicolumn{1}{r|}{0.519 (0.014)}    \\ \hline
\textbf{Full GP}     & \multicolumn{2}{c|}{0.233 (0.022)}  & \multicolumn{2}{c|}{0.100 (0.020)}  & \multicolumn{2}{c|}{0.502 (0.030)}            \\ \hline
\end{tabular}}
\caption{Mean and standard deviation, in parenthesis, of statistic $h_2$ over 400 replications for different grid sizes, $r=50$ and $m=5$.}
\label{tab2b}
\end{table}

%\bibliographystyle{apalike}%chicago, harvard
%\bibliography{biblio}
\end{document}